\def\C{{\mathbb C}}
\def\R{{\mathbb R}}
\def\Z{{\mathbb Z}}
\def \P{\mathbb P}
\def\CV{\mathcal{V}}
\def\ra{{\rightarrow}}
\newcommand{\G}[2]{\mathbb{G}_{#1,#2}}
\DeclareMathOperator*{\myvec}{vec}
\DeclareMathOperator*{\rank}{rank}
\newcommand{\bi}{\begin{itemize}}
\newcommand{\ei}{\end{itemize}}
\newcommand{\bd}{\begin{description}}
\newcommand{\ed}{\end{description}}
\newcommand{\beq}{\begin{equation}}
\newcommand{\eeq}{\end{equation}}
\newcommand{\beqn}{\begin{eqnarray}}
\newcommand{\eeqn}{\end{eqnarray}}
\newcommand{\beqna}{\begin{eqnarray*}}
\newcommand{\eeqna}{\end{eqnarray*}}
\newtheorem{corollary}{Corollary}
\newtheorem{lemma}{Lemma}
\newtheorem{theorem}{Theorem}
\newtheorem{proposition}{Proposition}
\newtheorem{conjecture}{Conjecture}
\newtheorem{problem}{Problem}
\newtheorem{example_scenario}{Example}
\newtheorem{remark}{Remark}
\begin{document}
%
\title{A Feasibility Test for Linear Interference Alignment in MIMO Channels with Constant Coefficients}
%
%
\author{\'Oscar~Gonz\'alez, \IEEEmembership{Student~Member,~IEEE}, Carlos~Beltr\'an, and~Ignacio~Santamar\'ia,~\IEEEmembership{Senior~Member,~IEEE}\thanks{\'O. Gonz\'alez and I. Santamar\'ia are with the Departamento de Ingenier\'ia de Comunicaciones (DICOM), Universidad de Cantabria, Santander, 39005, Spain. C. Beltr\'an is with the Departamento de Matem\'aticas, Estad\'istica y Computaci\'on, Universidad de Cantabria. Avda. Los Castros s/n, Santander, Spain. The work of \'O. Gonz\'alez and I. Santamar\'ia was supported by MICINN (Spanish Ministry for Science and Innovation) under grants TEC2010-19545-C04-03 (COSIMA), CONSOLIDER-INGENIO 2010 CSD2008-00010 (COMONSENS) and FPU grant AP2009-1105. Carlos Beltr\'an was partially supported by the MICINN grant MTM2010-16051.

This paper was presented in part at the IEEE 2012 International Symposium on Information Theory (ISIT 2012), Cambridge, MA, USA.

\noindent Copyright (c) 2013 IEEE. Personal use of this material is permitted.  However, permission to use this material for any other purposes must be obtained from the IEEE by sending a request to pubs-permissions@ieee.org.
}
}

\maketitle

\begin{abstract}
In this paper, we consider the feasibility of linear interference alignment (IA) for multiple-input multiple-output (MIMO) channels with constant coefficients for any number of users, antennas and streams per user; and propose a polynomial-time test for this problem. Combining algebraic geometry techniques with differential topology ones, we first prove a result that generalizes those previously published on this topic. Specifically, we consider the input set (complex projective space of MIMO interference channels), the output set (precoder and decoder Grassmannians) and the solution set (channels, decoders and precoders satisfying the IA polynomial equations), not only as algebraic sets but also as smooth compact manifolds. Using this mathematical framework, we prove that the linear alignment problem is feasible when the algebraic dimension of the solution variety is larger than or equal to the dimension of the input space {\it and} the linear mapping between the tangent spaces of both smooth manifolds given by the first projection is generically surjective. If that mapping  is not surjective, then the solution variety projects into the input space in a {\em singular way} and the projection is a zero-measure set. This result naturally yields a simple feasibility test, which amounts to checking the rank of a matrix. We also provide an exact arithmetic version of the test, which proves that testing the feasibility of IA for generic MIMO channels belongs to the bounded-error probabilistic polynomial (BPP) complexity class.
\end{abstract}
\begin{IEEEkeywords}
Interference alignment, MIMO interference channel, polynomial equations, algebraic geometry, differential topology.
\end{IEEEkeywords}



\section{Introduction}

\IEEEPARstart{T}{he} degrees of freedom (DoF) of a wireless interference network represent the number of non-interfering data streams that can be simultaneously transmitted over the network. Recently, it has been shown that to achieve the maximum DoF of the $K$-user multiple-input multiple-output (MIMO) interference channel, the interference from other transmitters must be aligned at each receiver in a lower-dimensional subspace \cite{JafarTut}. This is the basic idea of the interference alignment (IA) technique which first
originated out of the study of the degrees of freedom of the 2-user X channel  \cite{JafarX08,Motahari08}, shortly afterwards was extended to the $K$-user interference channel \cite{Jafar08}, and has received a lot of attention since then.

In this paper we consider the alignment problem for the $K$-user MIMO interference channel with constant channel coefficients. Also, we restrict our attention to IA schemes that apply linear decoders and precoders without channel or symbol extensions, which means that the MIMO channel matrices have no particular structure (e.g., diagonal or block diagonal)\footnote{We do not consider in this paper interference alignment schemes on the signal scale which are based on the properties of rationals and irrationals \cite{Ghasemi09},\cite{Etkin09},\cite{Wu11}.}. For this setting, when all transmitters and receivers have the same number of antennas, the ratio of total DoF to the single user DoF is upper bounded by 2, whereas this ratio increases to $K/2$ for frequency- or time-varying channels when the channel extensions are i.i.d. and exponentially long in $K$ \cite{Jafar08},\cite{Yetis10}. However, requiring channels to have an unbounded number of extensions can be a limiting factor in practice and, consequently, alignment in signal space with constant MIMO interference channels has been the preferred option for recent experimental studies on IA \cite{Katabi09},\cite{Ayach10},\cite{Oscar11}.

In this paper, we address the feasibility of linear IA for MIMO interference networks with constant channel coefficients and no symbol extensions. Our focus is the single channel use IA feasibility problem, which has recently received a lot of attention, and results herein do not apply if multiple channel uses are considered. This problem amounts to solving a set of polynomial equations and some partial results can be found in \cite{Yetis10},\cite{BreslerCartwrightTseToappear},\cite{Razaviyayn1}. The first work to study this problem was \cite{Yetis10}, where the solvability of the IA polynomial equations was analyzed using classic results in algebraic geometry like Bezout's and Bernstein's theorems. By counting the number of equations and variables involved in any subset of zero-forcing alignment equations, Yetis et al. introduced in \cite{Yetis10} the definition of a proper system. Connections between proper and feasible systems were established only for the single-beam case in which each user transmits only one stream of data. When more than one data stream is transmitted, the genericity of the polynomial coefficients is destroyed and the equivalence between proper and feasible systems does not longer hold. Some information theoretic outer bounds, e.g., \cite{JafarDoF07} and \cite{JafarDoF10}, can be included in the properness definition to further close the gap between proper and feasible systems, but the precise connection between both concepts still remains unclear.

In \cite{Slock10}, the feasibility of IA was studied by interpreting the alignment process as a joint transmit-receive zero-forcing scheme in which each interfering stream can be suppressed at either the transmitter or the receiver sacrificing one degree of freedom. The proposed feasibility test, however, provides only necessary conditions and is combinatorial in nature since it requires to check all possible ways to suppress interfering streams at both sides of the link and for all users.

More recent work on the feasibility of IA has been presented in \cite{BreslerCartwrightTseToappear} and \cite{Razaviyayn1}. Specifically, \cite{BreslerCartwrightTseToappear} studies the dimensions of the algebraic varieties involved in the alignment problem (input, output and solution variety), and proves a sufficient and necessary condition of feasibility for the particular case of symmetric square MIMO interference channels, where all transmitters and receivers have the same number of antennas, all users transmit the same number of streams and there are at least three interfering users ($K \geq 3$). For the general case with arbitrary system parameters, only a necessary condition is proved in \cite{BreslerCartwrightTseToappear}. Similar algebraic tools are used in \cite{Razaviyayn1} to prove general bounds on the tuple of DoF that are achievable through linear interference alignment. Furthermore, for the particular case of symmetric systems where the number of transmit and receive antennas is divisible by the number of streams the bound is tight and can be achieved through IA.

In this work, we first prove a slight generalization of the results in \cite{BreslerCartwrightTseToappear} and \cite{Razaviyayn1} that fully characterizes the feasibility of linear interference alignment for MIMO channels with constant coefficients and no symbols extension in arbitrary settings (for any number of users, antennas and streams per user, and non necessarily fully-connected networks). To derive this result, we combine algebraic geometry techniques with differential topology ones and consider the three sets involved in the problem (i.e., the input set formed by the Cartesian product of complex projective matrices, $\mathcal{H}$, the output set formed by the Cartesian product of precoder and decoder Grassmannians, $\mathcal{S}$, and the solution variety formed by tuples of channels, decoders and precoders satisfying the alignment equations, $\CV$), not only as algebraic sets but also as smooth compact manifolds. Viewing the channels, the decoders and precoders, and the solution variety as compact manifolds, some important results stand out from the study of their tangent spaces. In words, we prove that the linear alignment problem is feasible when the algebraic dimension of $\CV$ is larger than or equal to the dimension of $\mathcal{H}$ {\it and} the linear mapping between the tangent spaces of both smooth manifolds given by the first projection is  surjective. If the mapping between the tangent spaces of $\CV$ and $\mathcal{H}$ is not surjective, then the whole set $\CV$ projects into $\mathcal{H}$ in a {\em singular way} and the projection is a zero-measure set. This situation explains those systems that are proper, but infeasible.

This result enables us to derive the main contribution of this paper, which is a simple feasibility test that amounts to checking the rank of a certain matrix. We provide floating-point and exact arithmetic versions of the test, as well as a detailed complexity analysis which proves that the problem of deciding infeasibility for generic MIMO channels belongs to the bounded-error probabilistic polynomial-time (BPP) complexity class in the Turing Machine model of computation. Using the proposed test we were able to study the feasibility of systems with a large number of antennas and users and, from the general trends observed, to put forward a conjecture on the number of linear DoF of symmetric $M \times N$ MIMO interference channels. Also, the proposed feasibility test can also be used to obtain the total DoF for any arbitrary K-user MIMO interference channel without resorting to the existing inner and outer information-theoretic bounds. Some work along this line has recently been discussed in \cite{Gonzalez2013}.

The paper is organized as follows. In Section \ref{systemmodel}, the system model is introduced and the IA feasibility problem is formally stated. In Section \ref{main} we present a result that characterizes the feasibility of linear interference alignment for MIMO channels with constant coefficients in arbitrary settings. The proposed feasibility test, which essentially consists of checking whether a certain matrix is rank-deficient or not, is presented in Section \ref{test}. In this section we also present floating-point and exact arithmetic versions of the test, and prove that the later describes a BPP Turing machine. In Section \ref{sec:proofs}, we prove the main theorems of the paper. In Section \ref{sec:simulations}, we validate our feasibility test in several symmetric and asymmetric interference channels showing that its results are consistent with other previously known results. Additionally, we use our test to establish a conjecture on the DoF of the $K$-user symmetric interference channel. Finally, the main conclusions are summarized in Section \ref{sec:conclusions}.


\section{System model and problem statement}
\label{systemmodel}

\subsection{System model}

We consider in this paper the $K$-user MIMO interference channel with transmitter $j$ having $M_j\geq1$ antennas and receiver $j$ having $N_j\geq1$ antennas. Each user $j$ wishes to send $d_j\geq0$ streams or messages. We adhere to the notation used in \cite{Yetis10} and denote this asymmetric interference channel as $\prod_{k=1}^K \left(M_k\times N_k,d_k\right)=\left(M_1\times N_1,d_1\right)\cdots \left(M_K\times N_K,d_K\right)$. The symmetric case in which each user transmits $d$ streams and is equipped with $M$ transmit and $N$ receive antennas is denoted as $\left(M\times N,d\right)^K$. In the square symmetric case all users have the same number of antennas $M=N$.

The MIMO channel from transmitter $l$ to receiver $k$ is denoted as $H_{kl}$ and assumed to be flat-fading and constant over time. Each $H_{kl}$ is an $N_k\times M_l$ complex matrix (i.e., $H_{kl} \in \mathbb{C}^{N_k\times M_l}$). All channels are independent of each other and their entries are also independently drawn from a continuous distribution (channels generated in this way are said to be generic). We let $\Phi\subseteq\{1,\ldots, K\}\times\{1,\ldots,K\}$ be the (nonempty) subset of indexes $(k,l),k\neq l$ such that $H_{kl}$ is nonzero, therefore we assume that $H_{kl}$ is defined for $(k,l)\in\Phi$. Note that if $\Phi=\{(k,l):k\neq l\}$, then the interference channel is fully connected, otherwise the channel is partially connected, which can be due to path loss or shadowing \cite{Lau11}. Both scenarios are covered by the results in this paper. We will denote by $\sharp(\Phi)$ the number of elements in the (finite) set $\Phi$ (i.e., the non-zero interference links).

User $j$ encodes its message using an $M_j \times d_j$ precoding matrix $V_j$ and the received signal is given by
\begin{equation}
\label{eq:received}
y_j=H_{jj}V_jx_j + \sum_{i\neq j}H_{ji}V_ix_i +n_j, \hspace{1cm} 1\leq j \leq K
\end{equation}
where $x_j$ is the $d_j \times 1$ transmitted signal and $n_j$ is the zero mean unit variance circularly symmetric additive white Gaussian noise vector. The first term in (\ref{eq:received}) is the desired signal, while the second term represents the interference space. The receiver $j$ applies a linear decoder $U_j$ of dimensions $N_j \times d_j$, i.e.,
\begin{equation}
\label{eq:received1}
U_j^T y_j=U_j^T H_{jj}V_jx_j + \sum_{i\neq j}U_j^T H_{ji}V_ix_i +U_j^T n_j, \quad 1\leq j \leq K,
\end{equation}
where superscript $T$ denotes transpose.

\subsection{Problem statement}
\label{sec:problem_statement}
The interference alignment (IA) problem consists in finding the decoders and precoders, $V_j$ and $U_j$, in such a way that the interfering signals at each receiver fall into a reduced-dimensional subspace and the receivers can then extract the projection of the desired signal that lies in the interference-free subspace. To this end it is required that the polynomial equations
\begin{equation}\label{eq:1}
U_k^TH_{kl}V_l=0,\qquad (k,l)\in\Phi,
\end{equation}
are satisfied, while the signal subspace for each user must be linearly independent of the interference subspace and must have dimension $d_k$, that is
\begin{equation}\label{eq:rank}
\rank(U_k^TH_{kk}V_k)=d_k,\qquad\forall\;k\in\{1,\ldots,K\}.
\end{equation}
We recall that all matrices $H_{kl}$ (including direct link matrices, $H_{kk}$) are generic, that is, their entries are drawn from a continuous probability distribution and are independent of each other (independence among different links also holds). Consequently, \eqref{eq:rank} is satisfied almost surely. Thus, we will consider that solving the linear IA feasibility problem amounts to solve the polynomial equations in \eqref{eq:1} only.

In this paper we are interested in studying the relationship between $d_j,M_j,N_j,K$ such that the linear alignment problem is feasible. For example, we may want to know: for given $K$ and $d_j$, which collections of $M_j,N_j$ make the problem feasible (for every possible choice of the matrices $H_{kl}$), or for given $K$ and $M_j,N_j$, which are the greatest values for $d_j$ that can be achieved? In the later case, the tuple $\left( d_1,\ldots,d_K \right)$ defines the maximum degrees of freedom (DoF) of the system, that is the maximum number of independent data streams that can be transmitted without interference in the channel.

It is well-known that the number of streams transmitted by all users must satisfy the point-to-point bounds
\begin{equation}\label{eq:p2p_assumption1_fullyconnected}
1\leq d_j\leq\min(N_j,M_j),\qquad\forall\;j\in\{1,\ldots,K\}.
\end{equation}
Note that we can exclude the case that some $d_j=0$ without loosing generality, because it amounts to removing all pairs containing the index $j$ from $\Phi$. From a mathematical point of view, in the general (not necessarily fully connected) case, the natural substitute of (\ref{eq:p2p_assumption1_fullyconnected}) is the following:
\begin{equation}\label{eq:p2p_assumption1_prev}
1\leq d_k\leq N_k,\qquad  1\leq d_l\leq M_l,\qquad\forall (k,l)\in\Phi.
\end{equation}
We want to state absolutely general results, which leads us to consider the two following sets:
\[
\Phi_R=\{k\in\{1,\ldots,K\}:\,\exists\,l\in\{1,\ldots,K\},(k,l)\in\Phi\},
\]
\[
\Phi_T=\{l\in\{1,\ldots,K\}:\,\exists\,k\in\{1,\ldots,K\},(k,l)\in\Phi\}.
\]
Note that $\Phi_R$ ($\Phi_T$) is the first (second) projection of the set $\Phi$. In words, $\Phi_R$ indicates the set of receivers which suffer interference from at least one transmitter, whereas $\Phi_T$ contains the set of transmitters which provoke interference to at least one receiver. Then, (\ref{eq:p2p_assumption1_prev}) is equivalent to
\begin{equation}\label{eq:p2p_assumption1}
1\leq d_k\leq N_k,\; \hspace{0.1cm}  \forall\;k\in\Phi_R,\qquad 1\leq d_l\leq M_l,\; \hspace{0.1cm}\forall\;l\in\Phi_T.
\end{equation}
Equations (\ref{eq:p2p_assumption1_fullyconnected}) and (\ref{eq:p2p_assumption1}) are equivalent if each user interferes at least one user and it is interfered by at least one user, that is if $\Phi_R=\Phi_T=\{1,\ldots,K\}$. In particular, they are equivalent in the fully--connected case. Note also that if $l\not\in\Phi_T$ then the precoder $V_l$ does not appear in the equations (\ref{eq:1}) and plays no role in the problem, thus it consists of free variables. We deem that it is more appropriate not to consider these free variables as part of the problem. Hence, if for example we say that the problem has finitely many solutions we mean that the number of solutions of the non-free variables is finite (although, if there is some $l\not\in\Phi_T$, there will be infinitely many ways to choose $V_l$). The same can be said if $k\not\in\Phi_R$ for some $k$.

Additionally, note that if user $l$ transmits all possible streams according to its point-to-point bound, $d_l=M_l$ (which implies that $M_l \leq N_l$); then, it is not possible for user $k\neq l$, with $(k,l)\in \Phi$, to also reach its point-to-point bound with equality and thus receive $d_k=N_k$ desired streams (with $N_k \leq M_k$). This stems from the fact that receiver $k$ has to leave at least a one-dimensional subspace for the interference, otherwise the desired signal subspace would not be free of interference. In other words, the two users of an interference link cannot reach their point-to-point bounds simultaneously. Formally, this condition can be stated as the following set of necessary conditions
\begin{equation}\label{eq:p2p_assumption2}
N_kM_l>d_kd_l, \qquad\forall (k,l)\in\Phi,
\end{equation}
which complement the direct link conditions in \eqref{eq:p2p_assumption1}. To derive our results we only assume that both \eqref{eq:p2p_assumption1} and \eqref{eq:p2p_assumption2} hold.

There are other necessary conditions for feasibility that involve two or more users. Specifically, in \cite{JafarDoF07} it was proved that for the $2$-user MIMO interference channel consisting of users $k$ and $l$, if $(k,l)\in\Phi$ and $(l,k)\in\Phi$, the DoF satisfy
\begin{align}\label{eq:DoF}
d_k+d_l\leq \min &\left( M_l+M_k, N_l+N_k,\right. \notag \\ &\left.\phantom{(}\max(N_k,M_l), \max(N_l,M_k) \right).
\end{align}
For the symmetric $K$-user MIMO interference channel\footnote{Let us remind again that we are only considering the DoF achievable with linear alignment schemes and without symbol extensions. When lattice-based alignment schemes are used, the DoF of interference channels with real and constant coefficients have been studied in \cite{Ghasemi09},\cite{Etkin09},\cite{Wu11}.}, the following outer bound for the total number of DoF was proved in \cite{JafarDoF10}
\begin{align}\label{eq:DoF_K}
d_1+\cdots +d_K &\leq K \min(M,N)\,\emph{I}\left( K \leq R\right)\notag \\
&+ K \frac{\max \left (M,N \right)}{R+1}\emph{I}\left( K > R \right),
\end{align}
where $\emph{I} \left( \cdot \right)$ represents the indicator function and $R=\lfloor \max \left (M,N \right)/\min \left (M,N \right)\rfloor$.


Our techniques for proving the main results will come from algebraic geometry and differential topology. Our arguments are sometimes similar to those in \cite{BreslerCartwrightTseToappear},\cite{Razaviyayn1}, with the difference that not only the algebraic nature of the objects is used, but also their smooth manifold structures, as well as the key property of compactness. We are greatly inspired by Shub and Smale's construction for polynomial system solving, see \cite{ShSm93b} or \cite{BlCuShSm98}. Some basic knowledge of smooth manifolds is assumed. More advanced results on differential topology that will also be used during the derivations are relegated to Appendix \ref{app:difftop}.

To formally state the IA feasibility problem, it is convenient to first define three tuples: $H$, $U$ and $V$. $H$ denotes the collection of all $H_{kl},\,(k,l)\in\Phi$ and, similarly, $U$ and $V$ denote the collection of $U_k$, $k\in\Phi_R$ and $V_l$, $l\in\Phi_T$, respectively. Even though for the system model described in \eqref{eq:received} and \eqref{eq:received1} we have used the symbols $H_{kl}$, $U_k$ and $V_l$ for complex matrices, in the following we will show that to solve the problem is more convenient to let them live in different spaces that take into account the invariances of (\ref{eq:1}). If $(H,U,V)$ satisfies (\ref{eq:1}) then we can multiply each matrix $H_{kl}$ in $H$ by a nonzero complex number and (\ref{eq:1}) will still hold. Thus, it makes sense to consider our matrices as elements of the projective space of matrices, i.e., we can think of $H_{kl}$ as a whole line in $\mathbb{C}^{N_k\times M_l}$. Similarly, we can think of each $U_k$ (equiv. $V_l$) as a subspace spanned by the columns of a $N_k\times d_k$ (equiv. $M_l\times d_l$) matrix. From a mathematical point of view, this consideration permits us to use projective spaces and Grassmannians (which are both compact spaces) instead of non-compact affine spaces.

Thus, we consider the projective space of complex channel matrices, $\P(\mathbb{C}^{N_k\times M_l})$, and the Grassmannians\footnote{For integers $1\leq a\leq b$ we denote as $\G{a}{b}$ the Grassmannian formed by the linear subspaces of (complex) dimension $a$ in $\C^b$.} formed by the decoders and precoders. With some abuse of notation we will refer to their elements as $H_{kl}$ and $U_k$, $V_l$, respectively. More formally, given $\sharp(\Phi)$ elements
\[
H_{kl}\in\P(\mathbb{C}^{N_k\times M_l}),\quad (k,l)\in\Phi,
\]
to solve the IA problem one would like to find a collection of subspaces
\[
U_k\in\G{d_k}{N_k},\quad k\in\Phi_R,\qquad V_l\in\G{d_l}{M_l},\quad l\in\Phi_T
\]
such that the polynomial equations (\ref{eq:1}) are satisfied. The (generic) IA feasibility problem consists on deciding whether, given $K,M_j,N_j,d_j$ and $\Phi$,  all or almost all choices of $H_{kl}$ will admit such $U_k,V_l$.
We have already pointed out that the IA equations given by (\ref{eq:1}) hold or do not hold independently of the particular chosen affine representatives of $(H,U,V)$.

As in \cite{BreslerCartwrightTseToappear}, the proof of our main theorems will follow from the study of the set $\{(H,U,V): \text{ (\ref{eq:1}) holds}\}$. More precisely, consider the following diagram
\begin{equation}\label{eq:diag}
\begin{matrix}
&&\CV&&\\
\pi_1&\swarrow&&\searrow&\pi_2\\
\mathcal{H}&&&&\mathcal{S}
\end{matrix}
\end{equation}
where
\[
\mathcal{H}=\prod_{(k,l)\in\Phi}\P(\mathbb{C}^{N_k\times M_l})
\]
is the input space of interference MIMO channels (here, $\prod$ holds for Cartesian product),
\[
\mathcal{S}=\left(\prod_{k\in\Phi_R}\G{d_k}{N_k}\right)\times\left(\prod_{l\in\Phi_T}\G{d_l}{M_l}\right).
\]
is the output space of decoders and precoders (i.e. the set where the possible outputs exist) and
\[
\CV=\{(H,U,V)\in\mathcal{H}\times\mathcal{S}:\text{ (\ref{eq:1}) holds}\}
\]
is the so--called solution variety. $\CV$ is given by certain polynomial equations, linear in each of the $H_{kl},U_{k},V_{l}$ and therefore is an algebraic subvariety of the product space $\mathcal{H}\times\mathcal{S}$.

Note that, given $H\in\mathcal{H}$, the set $\pi_1^{-1}(H)$ is a copy of the set of $U,V$ such that (\ref{eq:1}) holds, that is the solution set of the linear interference alignment problem. On the other hand, given $(U,V)\in\mathcal{S}$, the set $\pi_2^{-1}(U,V)$ is a copy of the set of $H\in\mathcal{H}$ such that (\ref{eq:1}) holds. The feasibility question can then be restated as, {\em is $\pi_1^{-1}(H)\neq\emptyset$ for a generic $H$?}



\section{Characterizing the feasibility of linear IA}

\label{main}

In this section we present a theorem that characterizes the feasibility of linear interference alignment for MIMO channels with constant coefficients for any number of users, antennas and streams per user. This characterization will allow us to provide a polynomial-complexity test of feasibility for this problem which will be detailed in Section \ref{test}.

First, let us fix $d_j,M_j,N_j$ and $\Phi$ satisfying (\ref{eq:p2p_assumption1}) and (\ref{eq:p2p_assumption2}) and define $s\in\Z$ such that
\begin{equation}\label{eq:2}
s=\left( \sum_{k\in\Phi_R}N_kd_k-d_k^2\right)+\left( \sum_{l\in\Phi_T}M_ld_l-d_l^2\right)-\sum_{(k,l)\in\Phi}d_kd_l
\end{equation}
which accounts for the difference between the number of variables and the number of equations in the system of polynomial equations \eqref{eq:1}, as first studied in \cite{Yetis10}. In \cite[Theorem 2]{BreslerCartwrightTseToappear} and \cite[Theorem 1]{Razaviyayn1}, it has been proved that if $s<0$ then, for every choice of $H_{kl}$ out of a zero--measure subset, the system of polynomial equations (\ref{eq:1}) has no solution and, therefore, the IA problem is infeasible. On the other hand, when $s\geq0$, which is the scenario of interest for this paper, the IA problem can be either feasible or infeasible. The situation remains equal in the partially connected case.

\begin{remark}
\label{th1_remark2}
In \cite{Yetis10}, systems were classified as either \textit{proper} or \textit{improper}. A system was deemed \textit{proper} if and only if for every subset of equations in \eqref{eq:1},
the number of variables is at least equal to the number of equations
in that subset. This evaluation may be computationally demanding
with the additional limitation that \textit{properness} is necessary
\cite{BreslerCartwrightTseToappear, Razaviyayn1} but not sufficient for a system to be feasible.  For that reason, in this paper we will follow a simpler convention that classifies a system as $proper$ when $s\ge 0$, which only considers the total set of equations. Our reasoning to define $s$ is based on dimensionality counting arguments whose proof is similar to the ones presented in \cite[Lemma 7]{BreslerCartwrightTseToappear} and which we have omitted herein to avoid repetitions.
\end{remark}
\vspace{0.5cm}
When $s \geq 0$ the following result suggests a practical test to distinguish if, for a choice of $d_j,M_j,N_j,\Phi$, the corresponding linear IA problem is feasible or infeasible.

\begin{theorem}
\label{th:distinct}
Fix $d_j,M_j,N_j$ and $\Phi$ satisfying (\ref{eq:p2p_assumption1}) and (\ref{eq:p2p_assumption2}). Let $s$ be defined by (\ref{eq:2}) and assume that $s\geq0$. Then, the following two cases appear
\begin{enumerate}
\item for every choice of $H_{kl}$ out of a zero--measure subset, the system (\ref{eq:1}) has no solution and, therefore, the IA problem is infeasible; or
\item for every choice of $H_{kl}$ there exists at least one solution to (\ref{eq:1}) and for every choice of $H_{kl}$ out of a zero--measure set the set of solutions of (\ref{eq:1}) is a smooth complex algebraic submanifold; therefore, the IA problem is feasible. In this situation, the following claims are equivalent:

\begin{enumerate}
\item The system (\ref{eq:1}) has solution for every choice of $H_{kl}$.
\item For almost every choice of $H_{kl}$, and for any choice of $U_k,V_l$ satisfying (\ref{eq:1}), the linear mapping
 \par\rlap{\parbox{\columnwidth}
  {
  \begin{align}\label{eq:4}
\phantom{12345}\theta:&\displaystyle\prod_{k\in\Phi_R}\!\!\mathbb{C}^{N_k\times d_k}\times \displaystyle\prod_{l\in\Phi_T}\!\!\mathbb{C}^{M_l\times d_l}&\ra &\;\;\displaystyle\prod_{(k,l)\in\Phi}\!\!\mathbb{C}^{d_k\times d_l}\notag\\
&(\{\dot{{U}}_k\}_{k\in\Phi_R},\{\dot{{V}}_l\}_{l\in\Phi_T})&\mapsto&\;\;\left\{\dot{{U}}_k^T {H}_{kl}{{V}}_l+\right.\notag\\
&&&\phantom{\left\{\right.}\left.{{U}}_k^T{H}_{kl}\dot{{V}}_l\right\}_{(k,l)\in\Phi}
\end{align}
}}
is surjective (i.e. it has maximal rank, equal to $\sum_{(k,l)\in\Phi}d_kd_l$).  Here, we note that some affine representatives $H_{kl}, U_k, V_l$ have been taken.
\item There exist a $H_{kl}$ and a choice of $U_k,V_l$ satisfying (\ref{eq:1}), such that the linear mapping (\ref{eq:4}) is surjective.
\end{enumerate}
\end{enumerate}
\end{theorem}

\subsection{Geometrical insight behind Theorem \ref{th:distinct}}
\label{sec:geometrical_insight}
A clear understanding of Theorem \ref{th:distinct} comes from considering the solution variety already defined as
\[
\CV=\{(H,U,V):\text{ (\ref{eq:1}) holds}\}.
\]
Consider the projection $\pi_1$ into the first coordinate $H$. Then, an instance $H$ has a solution if and only if $\pi_1^{-1}(H)$ is nonempty. It turns out that both the set $\mathcal{H}$ of inputs $H$ and the set $\CV$ are smooth manifolds. The case $s<0$ will correspond to the dimension of $\CV$ being smaller than that of $\mathcal{H}$, which intuitively implies that the projection of $\CV$ cannot cover the greatest part of $\mathcal{H}$. The case $s\geq0$ will correspond to the dimension of $\CV$ being greater than or equal to that of $\mathcal{H}$. A naive approach should then tell us that the projection of $\CV$ will cover ``at least a good portion'' (i.e. an open subset) of $\mathcal{H}$. Indeed, the algebraic nature of our sets and classical results from differential topology imply that if an open set of $\mathcal{H}$ is reached by the projection, then the whole $\mathcal{H}$ is. This will be the case of item $2)$ of Theorem \ref{th:distinct}. But there is another, counterintuitive thing that can happen: if the {\em whole} set $\CV$ projects into $\mathcal{H}$ in a {\em singular way} (more precisely, if every point of $\mathcal{V}$ is a {\em critical point} of $\pi_1$, namely the tangent space above does not cover the tangent space below), it will still happen that the image of $\CV$ is a zero--measure subset of $\mathcal{H}$, which will produce the case $1)$ of Theorem \ref{th:distinct}. Geometrically, the reader may imagine $\CV$ as a vertical line and $\mathcal{H}$ as a horizontal line: the projection of $\CV$ into $\mathcal{H}$ is just a point, thus a zero--measure set, although both manifolds have the same dimension. This setting looks such a particular situation that it is hard to imagine it happening in real--life examples, but indeed it does happen for many choices of $M_j,N_j,d_j,K$ that are in case $1)$. The good news is that the particular case that all of $\CV$ projects into $\mathcal{H}$ in a singular way, can be easily detected by linear algebra routines involving the mapping (\ref{eq:4}) which is related to the derivative of this projection. This analysis will produce the feasibility test proposed in this paper.

\subsection{Extensions and discussion of related results}
\label{sec:extensions_and_related_results}
Let us point out that the model we have used for our derivations, i.e. diagram \eqref{eq:diag}, is similar to that used in \cite[Section 2]{BreslerCartwrightTseToappear}. The only difference is that in our case we let channels live in the projective space of matrices which is a compact space instead of the non-compact affine space used in \cite{BreslerCartwrightTseToappear}. The arguments that lead to the proof that a system is infeasible when $s<0$ are based on the dimensionality of the solution variety \cite[Lemmas 7, 8]{BreslerCartwrightTseToappear}. 

The fact that either almost every $H_{kl}$ admits a solution or almost every $H_{kl}$ does not admit a solution, was essentially proved in \cite{BreslerCartwrightTseToappear} and \cite{Razaviyayn1}. The constructions of the Zariski cotangent space in \cite{BreslerCartwrightTseToappear}, the Jacobian computation in \cite{Razaviyayn1} and the matrix in \cite{Ruan2013} are strongly related to that of the mapping \eqref{eq:4}. One difference is that the derivation of \eqref{eq:4} does not require any particularization or partitioning of the factors appearing in the alignment equations \eqref{eq:1}, as done in \cite{BreslerCartwrightTseToappear} and \cite{Razaviyayn1}, respectively. Instead, it has been derived (independently of the chosen representatives) as a mapping between tangent spaces, which endows our approach with the simple geometrical interpretation provided in Section \ref{sec:geometrical_insight}.

Furthermore, despite the obvious connections with \cite{BreslerCartwrightTseToappear} and \cite{Razaviyayn1}, the tools and mathematical framework used in this paper allowed us to prove that, when the system is feasible and $s=0$, then the number of IA solutions is finite and constant for almost all channel realizations. This is formally stated in the following lemma.
\begin{lemma}
\label{lem:numbersolutions}
For almost every $H$, the solution set in case 2) of Theorem \ref{th:distinct} is a smooth complex algebraic submanifold of dimension $s$. If $s=0$, then there is a constant $C\ge 1$ such that for every choice of $H_{kl}$ out of a proper algebraic subvariety (thus, for every choice out of a zero measure set) the system \eqref{eq:1} has exactly $C$ alignment solutions.
\end{lemma}

\begin{IEEEproof}
See Section \ref{sec:proofs}.
\end{IEEEproof}

A practical consequence of Lemma \ref{lem:numbersolutions} is that affine alignment solutions (when finite) are grouped in $C$ orbits of equivalent solutions spanning the same subspace. This fact is automatically captured by the way we have modeled the output space $\mathcal{S}$ that considers precoders and decoders as Grassmannians and therefore enables us to see those orbits as $C$ isolated solutions.
\begin{remark}
As pointed out in Section
\ref{sec:problem_statement}, if some $k_0$ satisfies $k_0\not\in\Phi_R$ or some $l_0$ satisfies $l_0\not\in\Phi_T$, then any solution $(\{U_k\}_{k\in\Phi_R},\{V_l\}_{l\in\Phi_T})$ can be complemented with any choice of $U_{k_0}$ and $V_{l_0}$ and still be a solution of (\ref{eq:1}), just because the variables $U_{k_0}$ and $V_{l_0}$ do not appear in (\ref{eq:1}). When we say that the number of solutions is a finite number $C$, we are not counting these infinitely many possible choices for $U_{k_0}$ and $V_{l_0}$. We trust that this convention is clear and natural enough to avoid confusion.
\end{remark}


\section{Proposed feasibility test}
\label{test}
\subsection{A floating-point arithmetic test of feasibility}
\label{sec:floatingpoint_test}
We now construct a test for checking whether a given choice of $d_j,M_j,N_j,\Phi$ defines a feasible alignment problem or not. To develop this test, we first have to choose a point $H_{kl}, U_k, V_l$ such that (\ref{eq:1}) holds. An arbitrary set of channels, decoders and precoders satisfying the IA equations (\ref{eq:1}) can be obtained very easily by solving what we call the {\it inverse} IA problem; that is, given a set of arbitrary (e.g. random) decoders and precoders, $U_k,V_l$, find a set of MIMO channels such that (\ref{eq:1}) holds. This is totally different from (and much easier to solve than) the original IA problem, which is given channel matrices $H_{kl}$, find elements $U_k,V_l$ that solve (\ref{eq:1}). Since the polynomial equations (\ref{eq:1}) are linear in $H_{kl}$ the {\it inverse} IA problem is completely solved by the following Lemma.
\begin{lemma}\label{lem:particularH}
Fix any choice of $d_j,M_j,N_j,\Phi$ satisfying (\ref{eq:p2p_assumption1}) and (\ref{eq:p2p_assumption2}), and let $(U,V)\in\mathcal{S}$ be any element. Then, the set
\[
\pi_2^{-1}(U,V)=\{H\in\mathcal{H}:(H,U,V) \text{ solve (\ref{eq:1}) }\}\subseteq\mathcal{H}
\]
is a nonempty product of projective vector subspaces and a smooth submanifold of $\mathcal{H}$ of complex dimension equal to
\[
\left(\sum_{(k,l)\in\Phi}N_kM_l-d_kd_l\right)-\sharp(\Phi).
\]
In particular, this quantity is greater than or equal to $0$.
\end{lemma}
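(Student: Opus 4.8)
The plan is to fix $(U,V)\in\mathcal{S}$, choose affine representatives $\tilde U_k,\tilde V_l$ of full column rank (since $U_k$ spans a $d_k$-dimensional subspace and $V_l$ a $d_l$-dimensional subspace, such representatives exist), and then recognize the fiber $\pi_2^{-1}(U,V)$ as the solution set of a \emph{linear} homogeneous system in the entries of $H$. Indeed, equation (\ref{eq:1}) reads $U_k^T H_{kl} V_l=0$, and for fixed $U_k,V_l$ this is, for each $(k,l)\in\Phi$, a collection of $d_k d_l$ linear equations in the $N_k M_l$ entries of $H_{kl}$. Because the equations couple $H_{kl}$ to $U_k$ and $V_l$ only, the system \emph{decouples across the index pairs}: the condition on $H_{kl}$ involves only the block $H_{kl}$. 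So the total solution set in $\tilde{\mathcal H}$ is a direct product, over $(k,l)\in\Phi$, of the linear subspaces $L_{kl}=\{H_{kl}\in\mathcal{M}_{N_k\times M_l}(\C):U_k^T H_{kl}V_l=0\}$.

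The next step is to compute $\dim_{\C} L_{kl}$. The map $H_{kl}\mapsto U_k^T H_{kl}V_l$ from $\mathcal{M}_{N_k\times M_l}(\C)$ to $\mathcal{M}_{d_k\times d_l}(\C)$ is $\C$-linear, and I claim it is surjective precisely because $U_k$ and $V_l$ have full column rank. Concretely, using the vectorization identity $\myvec(U_k^T H_{kl} V_l)=(V_l^T\otimes U_k^T)\myvec(H_{kl})$, the coefficient matrix is the Kronecker product $V_l^T\otimes U_k^T$, whose rank is $\rank(U_k^T)\cdot\rank(V_l^T)=d_k d_l$. Hence the map is onto $\mathcal{M}_{d_k\times d_l}(\C)$ and its kernel $L_{kl}$ has complex dimension $N_k M_l-d_k d_l$. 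The constraints (\ref{eq:8}) and (\ref{eq:9}) guarantee $d_k\le N_k$ and $d_l\le M_l$, so the full-rank representatives exist and $N_kM_l-d_kd_l\ge 0$ for each block; in particular each $L_{kl}$ is nonzero (it contains $0$, and typically much more).

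To pass from the affine picture to the projective one, I note that each $L_{kl}$ is a linear subspace of the matrix space, so after projectivization $\P(L_{kl})\subseteq\P(\mathcal{M}_{N_k\times M_l}(\C))$ is a projective linear subspace of complex dimension $(N_kM_l-d_kd_l)-1$. The fiber $\pi_2^{-1}(U,V)$ is exactly the product $\prod_{(k,l)\in\Phi}\P(L_{kl})$, which is a product of projective subspaces as claimed; it is nonempty because each $L_{kl}\neq\{0\}$, and it is a smooth submanifold of $\mathcal H$ since products of projective linear subspaces are smooth. Summing the fiber dimensions over $(k,l)\in\Phi$ gives
\[
\sum_{(k,l)\in\Phi}\bigl((N_kM_l-d_kd_l)-1\bigr)=\left(\sum_{(k,l)\in\Phi}N_kM_l-d_kd_l\right)-\sharp(\Phi),
\]
which is the stated dimension, and the displayed nonnegativity follows from each summand being $\ge 0$.

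The computations here are entirely routine linear algebra; the only point requiring care is the \emph{full-rank} claim for the representatives, so the main (mild) obstacle is verifying that full column rank of $U_k$ and $V_l$ forces surjectivity of $H_{kl}\mapsto U_k^T H_{kl}V_l$ and hence the exact kernel dimension $N_kM_l-d_kd_l$ uniformly over the choice of $(U,V)$. This is precisely where the Kronecker-product rank formula is used, and it is also what makes the dimension count independent of the particular $(U,V)$, so the fiber dimension is constant — a fact I expect to reuse when studying $\pi_2$ as a submersion in the proof of Theorem \ref{th:mainmath}.
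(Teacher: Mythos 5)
Your route is genuinely different from the paper's and, in most respects, cleaner: you decouple the fiber across the index pairs, identify each block condition as the kernel of the linear map $H_{kl}\mapsto U_k^T H_{kl}V_l$, and get the kernel dimension $N_kM_l-d_kd_l$ from the Kronecker rank identity $\rank(V_l^T\otimes U_k^T)=d_kd_l$. The paper instead first shows all fibers of $\pi_2$ are linearly isomorphic to one another, then invokes the Preimage Theorem at a regular value together with the dimension of $\CV$ (Lemma~\ref{lem:V}); your argument avoids that machinery entirely and makes the constancy of the fiber dimension manifest. All of that part of your write-up is correct.

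There is, however, one genuine gap: the nonemptiness. You establish only $N_kM_l-d_kd_l\ge 0$ and then assert that each $L_{kl}$ is nonzero because ``it contains $0$, and typically much more.'' That is not a proof: if $N_kM_l=d_kd_l$ then $L_{kl}=\{0\}$, so $\P(L_{kl})=\emptyset$ and the whole fiber is empty; the same equality would also make the summand $(N_kM_l-d_kd_l)-1$ negative, breaking your final nonnegativity claim. What you actually need is the strict inequality $N_kM_l>d_kd_l$ for every $(k,l)\in\Phi$, and this is exactly where hypothesis (\ref{eq:9}) enters --- you cite it but never use it. The paper's argument (its equation (\ref{eq:10})) is the missing step: if $N_kM_l\le d_kd_l$, then combined with $d_k\le N_k$ and $d_l\le M_l$ from (\ref{eq:8}) one gets $d_kd_l=N_kM_l$, which forces $d_k=N_k$ and $d_l=M_l$, hence $d_k+d_l=N_k+M_l$, contradicting the strict inequality in (\ref{eq:9}). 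Insert that two-line argument and your proof is complete.
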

\begin{IEEEproof}
See Appendix \ref{Appendix2}.
\end{IEEEproof}
 Lemma \ref{lem:particularH} shows that we may fix our $U$ and $V$ to be the ones of our choice and there always exists $H$ forming a valid element $(H,U,V)\in\CV$. If, for that choice, the linear mapping defined in (\ref{eq:4}) is surjective, then the alignment problem is generically feasible by item (2.c) of Theorem \ref{th:distinct}. If for generic $H$ that mapping is not surjective the alignment problem is not generically feasible, namely it can be solved just for a zero--measure set of $H_{kl}$. The proposed feasibility test then has to perform two tasks:
\begin{enumerate}
\item Find an arbitrary $H_{kl},U_k,V_l$ such that (\ref{eq:1}) holds. We will detail later a simple choice for these elements.
\item To check whether the matrix $\Psi$ (in any basis) defining the linear mapping (\ref{eq:4}) satisfies $\det(\Psi\Psi^*)\neq 0$ (which is equivalent to mapping $\theta$ defined in (\ref{eq:4}) being surjective) or not.
\end{enumerate}

Now, we detail the two stages of the proposed IA feasibility test.

\subsubsection{Finding an arbitrary IA solution}\label{sec:IAinverse}
The first stage requires finding arbitrary $U_k,V_l$ and their corresponding MIMO channels $H_{kl}$ such that \eqref{eq:1} holds.  Lemma \ref{lem:particularH} allows us to choose any $U_k$ and $V_l$ of our choice. Thus, we will consider precoders and decoders given by
\begin{equation}\label{eq:form1}
V_l=\binom{I_{d_l}}{0_{(M_l-d_l)\times d_l}}, \hspace{0.5cm}  U_k=\binom{I_{d_k}}{0_{(N_k-d_k)\times d_k}},
\end{equation}
and MIMO channels with the following structure
\begin{equation}\label{eq:form2}
 H_{kl}=\begin{pmatrix}0_{d_k\times d_l}&A_{kl}\\B_{kl}&C_{kl}\end{pmatrix},
\end{equation}
which trivially satisfy $U_k^T H_{kl}V_l=0$ and therefore belong to the solution variety.
We claim that essentially all the useful information about $\CV$ can be obtained from the subset of $\CV$ consisting on triples $(H_{kl},U_k,V_l)$ of the form (\ref{eq:form1}) and (\ref{eq:form2}). The reason is that given any other element $(H'_{kl},U'_k,V'_l)\in\CV$, one can easily find sets of orthogonal matrices $P_k$ and $Q_l$ satisfying
\[
U_k=P_kU'_k,\hspace{0.5cm} V_l=Q_lV'_l,
\]
and
\[
0={U'}^T_k H'_{kl}V'_l=U_k^T\left(P_k^*\right)^TH'_{kl}Q_l^*V_l,
\]
where the superscript $*$ denotes Hermitian. That is, the transformed channels $H_{kl}=\left(P_k^*\right)^TH'_{kl}Q_l^*$ have the form (\ref{eq:form2}), and the transformed precoders $V_l$ and decoders $U_k$ have the form (\ref{eq:form1}).

\subsubsection{Checking the rank of the linear mapping $\theta$}
For a particular element of the solution variety chosen as in (\ref{eq:form1}) and (\ref{eq:form2}), the linear mapping $\theta$ reduces to
\begin{equation}
\label{eq:linear_mapping}
\theta: \hspace{0.2cm} (\{\dot{{U}}_k\}_{k\in\Phi_R},\{\dot{{V}}_l\}_{l\in\Phi_T}) \hspace{0.2cm} \mapsto  \hspace{0.2cm}\left\{\dot{{U}}_k^T {B}_{kl}+{A}_{kl}\dot{{V}}_l\right\}_{(k,l)\in\Phi},
\end{equation}
where $\dot{{U}}_k$, $\dot{{V}}_l$ have dimensions $(N_k-d_k)\times d_k$ and $(M_l-d_l)\times d_l$, respectively. The mapping $\theta$ can also be written in matrix form as
\begin{equation}
\Psi w,
\end{equation}
where $w$ is a column vector of dimension $\sum_{k\in \Phi_R}(N_k-d_k)d_k+\sum_{l\in \Phi_T}(M_l-d_l)d_l$, built by stacking all columns of $\{\dot{{U}}_k^T\}_{k\in\Phi_R}$ and $\{\dot{{V}}_l^T\}_{l\in\Phi_T}^T$,
 and $\Psi$ is a block matrix with $\sharp(\Phi)$ row partitions (as many blocks as interfering links) and $2K$ column partitions (as many blocks as precoding and decoding matrices). Checking the feasibility of IA then reduces to check whether matrix $\Psi$ is full rank or not. Vectorization of the mapping \eqref{eq:linear_mapping} reveals that $\Psi$ is composed of two main kinds of blocks, $\Psi^{(A)}_{kl}$ and $\Psi^{(B)}_{kl}$, i.e.
\begin{equation}
\begin{split}
\label{eq:Psi_A_Psi_B}
\myvec(\dot{{U}}_k^T {B}_{kl}+{A}_{kl}\dot{{V}}_l)&=
\overbrace{(A_{kl}\otimes I_{d_k})K_{(N_k-d_k),d_k}}^{\displaystyle \Psi_{kl}^{(A)}}\myvec(\dot{U}_k)\\&+\underbrace{(I_{d_l}\otimes B_{kl}^T)}_{\displaystyle \Psi_{kl}^{(B)}}\myvec(\dot{V}_l),
\end{split}
\end{equation}
where $\otimes$ denotes Kronecker product and $K_{m,n}$ is the $mn \times mn$ commutation matrix which is defined as the matrix that transforms the vectorized form of an $m\times n$ matrix into the vectorized form of its transpose. Block $\Psi^{(B)}_{kl}$ has dimensions $d_ld_k\times d_l(M_l-d_l)$, whereas block $\Psi^{(A)}_{kl}$ is $d_ld_k\times d_k(N_k-d_k)$. For a given tuple $(k,l)$, $\Psi^{(B)}_{kl}$ and $\Psi^{(A)}_{kl}$ are placed in the row partition that corresponds to the interfering link indicated by the tuple $(k,l)$. $\Psi^{(B)}_{kl}$ is placed in the $l+K$-th column partition, whereas $\Psi^{(A)}_{kl}$ occupies the $k$-th column partition. The rest of blocks are occupied by null matrices. The dimensions of $\Psi$ are therefore
\begin{equation*}
\sum_{(k,l)\in\Phi}d_kd_l\times \sum_{k\in \Phi_R}(N_k-d_k)d_k+\sum_{l\in \Phi_T}(M_l-d_l)d_l,
\end{equation*}
whereas its structure is exactly the same as the incidence matrix of the network connectivity graph. Remarkably, in the particular case of $s=0$, $\Psi$ is a square matrix of size $\sum_{(k,l)\in\Phi}d_kd_l$.

Taking the $3$-user interference channel as an example, $\Psi$ is given as in \eqref{eq:linear_mapping_3user}, where the blocks $\Psi^{(B)}_{kl}$ and $\Psi^{(A)}_{kl}$ are given by \eqref{eq:Psi_A_Psi_B}.
\begin{figure*}[t]
\begin{equation}
\label{eq:linear_mapping_3user}
\begin{array}{ccc}
&& \text{Column partition}\\
\parbox[c]{2cm}{\centering Interfering\\link}&\parbox[c]{2cm}{\centering Row\\partition}&{\setlength\arraycolsep{1.2em}\begin{array}{cccccc}
1&2&3&4&5&6
\end{array}}\\
\begin{array}{c}
(1,2)\\ (1,3) \\(2,1)\\(2,3)\\(3,1)\\(3,2)
\end{array}&\begin{array}{c}
1\\ 2 \\3\\4\\5\\6
\end{array} & \left[ \begin{array}{cccccc}
\Psi^{(A)}_{12}&0&0&0&\Psi^{(B)}_{12}&0\\
\Psi^{(A)}_{13}&0&0&0&0&\Psi^{(B)}_{13}\\
0&\Psi^{(A)}_{21}&0&\Psi^{(B)}_{21}&0&0\\
0&\Psi^{(A)}_{23}&0&0&0&\Psi^{(B)}_{23}\\
0&0&\Psi^{(A)}_{31}&\Psi^{(B)}_{31}&0&0\\
0&0&\Psi^{(A)}_{32}&0&\Psi^{(B)}_{32}&0
\end{array}\right]
\end{array}
\end{equation}
\hrulefill
\end{figure*}

 Once $\Psi$ has been built, the last step is to check whether the mapping is surjective and, consequently, the interference alignment problem is feasible. This amounts to check if the rank of $\Psi$ is maximum, that is, equal to $\sum_{(k,l)\in\Phi}d_kd_l$. A simple method consists of generating a random element $b \in \C^{\sum_{{(k,l)\in\Phi}} d_kd_l}$, computing the least squares solution of $\Psi w=b$ and checking if $\|\Psi w-b\|$ is below a given threshold $\mu$. With a high probability in the choice of $b$ this test will determine if $\theta$ is a surjective mapping.

At this point, two questions regarding the practical implementation of this method may arise. The first one is related to the scalability of the proposed method. It is obvious that both the computational and storage requirements grow with the number of antennas, streams and users in the system. However, matrix $\Psi$ presents two characteristics which limit, to some extent, these requirements.

\begin{itemize}
\item First, $\Psi$ is a very sparse matrix with only $\sum_{(l,k)\in\Phi}(N_k-d_k)d_ld_k + \sum_{(k,l)\in\Phi}(M_l-d_l)d_kd_l$ non-zero entries, thus limiting both the computational and the storage requirements. Sparsity can be exploited by computing the least squares solution of $\Psi w=b$ from the sparse QR factorization of $\Psi$, for which efficient algorithms exist \cite{Davis2011}.
\item Recall also that the matrix-vector product $\Psi w$ is completely characterized by the entries of submatrices $A_{kl}$ and $B_{kl}$ in \eqref{eq:form2}. Black box iterative algorithms \cite{Paige1982} are able to solve the least squares problem by solely performing matrix-vector products, i.e. computing the linear transformation defined by the matrix $\Psi$. The main consequence of this is that $\Psi$ does not even need to be explicitly constructed thus reducing even further the storage requirements.
\end{itemize}
These considerations allowed us to evaluate the feasibility of systems whose resulting $\Psi$ is of dimensions up to $40000\times 40000$. As a rule of thumb, we could say that symmetric systems with a product $Kd$ up to $200$ are computable. As an example, we were able to check that the system $(86\times 139,25)^8$ is feasible. This operating range allowed us to extensively verify the feasibility of a wide variety of scenarios and even establish a new conjecture regarding the DoF of symmetric interference channels which is described in detail in Section \ref{sec:simulations}.

The second question refers to the reliability of the numerical results. Floating-point algorithms are always prone to round-off errors, hence, determining something as simple as the rank of a matrix may not be that easy, especially for very large systems. The choice of the threshold $\mu$ determines in the end to which extent our results are reliable. To eliminate this ambiguity, in Section \ref{sec:exact_test} we present a Turing machine, exact arithmetic, version of the proposed test and prove that checking the IA feasibility belongs to the complexity class of bounded-error probabilistic polynomial time (BPP) problems. From a practical point of view, however, the floating point version of the test described in this section was found to provide always robust and consistent results when the entries in $A_{kl}, B_{kl}$ and $w$ were drawn from a complex normal distribution with zero mean and unit variance, and the decision threshold was set to $\mu=10^{-3}$.

\subsection{Exact arithmetic test and complexity analysis}
\label{sec:exact_test}
The test explained after our Theorem \ref{th:distinct} has been programmed in floating point arithmetic, and it is thus sensitive to floating point errors. Although it is robust enough for many examples, a Turing machine version of this test (that is, a test working in exact arithmetic) is in order. Consider the following algorithm.
\begin{enumerate}
\item For $k\in\Phi_R$ and $l\in\Phi_T$, consider $H_{kl}$ as in (\ref{eq:form2}). Let $C_{kl}=0$ for all $k,l$ and let the entries of $A_{kl}$ and $B_{kl}$ be chosen (i.i.d uniformly) as $a+\sqrt{-1}b$ where $0\leq a,b< h$, $a,b\in\Z$, and
\[
h=8\sum_{(k,l)\in\Phi}(N_k-d_k)d_l+(M_l-d_l)d_k.
\]
Note thus that the entries of $A_{kl},B_{kl}$ are complex numbers whose real and imaginary parts are integers of bounded size, chosen at random.
\item  Check, using exact linear algebra procedures (such as the ones available in libraries IML \cite{Chen2005} or LinBox \cite{Dumas2002}), if the mapping \eqref{eq:linear_mapping} is surjective. Then,
\begin{itemize}
\item if the mapping is surjective, answer {\em feasible},
\item otherwise, answer {\em infeasible}.
\end{itemize}
\end{enumerate}

The following is our second main result.

\begin{theorem}\label{th:Turing}
The algorithm above is a Bounded Error Probability procedure (thus, describes a {\bf BPP} Turing machine) whose running time is polynomial in the input parameters $d_j,M_j,N_j,\sharp(\Phi)$:
\begin{itemize}
\item if the given parameters define a unfeasible alignment problem, answers {\em unfeasible}.
\item if the given parameters define a feasible alignment problem, with high probability the algorithm answers {\em feasible}, but there is a probability (in the choice of the coefficients of $A_{kl},B_{kl}$) of at most $1/4$ that the algorithm answers {\em unfeasible}.
\end{itemize}
\end{theorem}

\begin{IEEEproof}
See Section \ref{sec:proofs}. Here is an outline of the idea of the proof: if the scenario is feasible, then for every choice of $H_{kl}$ out of some zero measure set $\mathcal{Z}$, the mapping \eqref{eq:linear_mapping} is surjective. Of course, it could happen that every choice of $H_{kl}$ with integer, ``small'' entries is in $\mathcal{Z}$. But, for that to happen, $\mathcal{Z}$ must have a complicated topology (think for example in a line that touches all points in the $xy$--plane with integer coordinates bounded by some $h>0$: the line must have quite a complicated shape). But, the shape of $\mathcal{Z}$ is actually very simple because it is given by a set of multilinear equations of small degree. Thus, $\mathcal{Z}$ cannot contain too many integer points, and as a consequence for ``most'' integer points, the mapping in \eqref{eq:linear_mapping} must be surjective.
\end{IEEEproof}

Note that this kind of algorithm (with a bounded error probability just in one direction) is very common in mathematics (the most famous example is Miller--Rabin test for primality \cite{Miller1976,Ra80}). The use is very simple: if on a given input the algorithm answers {\em feasible} then the alignment is feasible. If the test is run $k$ times and its answer is {\em unfeasible} for all $k$ tries, then we can conclude that the alignment is unfeasible unless an extremely unlikely event (probability at most $1/4^k$) happened. The upper bound $1/4$ on the error probability in the one--try test can be changed to any $\epsilon<1$ by choosing a different value of $h$, but according to the previous discussion, the specific value is irrelevant.

Technically, Theorem \ref{th:Turing} asserts that the problem of deciding if a given choice of $d_j,M_j,N_j,\Phi$ is generically infeasible is in the complexity class BPP (bounded-error probabilistic polynomial time).

\begin{remark}
\label{th2_remark1}
Some complexity analysis have recently appeared in the literature claiming that to check the feasibility of IA problems is strongly NP-hard \cite{Razaviyayn2},\cite{Luo11}. However, there is a crucial difference between the problem considered in \cite{Razaviyayn2},\cite{Luo11} and that considered in this paper. The problem in \cite{Razaviyayn2} can be restated informally as follows:
\begin{problem}
\label{problem1}
Given $d_j$, $M_j$ and $N_j$, decide whether there exists a linear alignment solution for a {\it given} set of interference MIMO channels $H_{kl}$.
\end{problem}
However, we are considering in this paper a different feasibility problem:
\begin{problem}
\label{problem2}
Given $d_j$, $M_j$ and $N_j$ (and a connectivity graph or matrix $\Phi$), decide whether there exists a linear alignment solution for {\it generic} interference MIMO channels $H_{kl}$.
\end{problem}
While Problem \ref{problem1} is NP-hard, we have just shown that Problem \ref{problem2} can be solved in polynomial time. The complexity of Problem \ref{problem1} is due to the fact the authors in \cite{Razaviyayn2},\cite{Luo11} consider a given realization of $H_{kl}$. In fact, to check whether this channel realization admits a solution, can indeed be NP-hard. However, by restricting the problem to generic MIMO channels, e.g., channels with independent entries drawn from continuous distributions, the IA feasibility problem becomes much easier. Note also that even if checking the feasibility of IA can be done with polynomial complexity, finding the actual decoders and precoders that align the interference subspaces can still be NP-hard when $K$ is large, as proved in \cite{Razaviyayn2}.

\end{remark}

\begin{remark}
\label{th2_remark2}
The IA feasibility problem considered in this paper, that is, determining if a given stream distribution $\left( d_1,\ldots,d_K \right)$ can be generically achieved with linear IA is tightly related to that of finding the maximum total DoF (or the tuple achieving the maximum sum DoF, $\sum_{k=1}^K d_k$). Although we have shown that the former belongs to the BPP class, the complexity of the latter remains uncharacterized. Based on the proposed test, we have recently presented an algorithm \cite{Gonzalez2013} to compute the maximum DoF in arbitrary networks. Its working principle is performing an ordered search inside the region of potential feasible tuples (those which satisfy existing necessary conditions) until a feasible tuple is found. Unfortunately, for an arbitrary system, the accurate determination of this region may be a computationally demanding task. A problem of similar complexity is that of checking the necessary feasibility conditions in \cite[Theorem 2]{BreslerCartwrightTseToappear} or \cite[Theorem 1]{Razaviyayn1}, which involve an exponential number of constraints.
\end{remark}

\section{Proof of Main Results}
\label{sec:proofs}




In what follows we provide a rigorous proof of our results.
Most preliminary details of the proof are relegated to appendices.

\subsection{Dimensions of the algebraic manifolds involved in the problem}

In this subsection we recall the dimensions of the algebraic sets involved in the problem. Similar results have appeared in \cite{Yetis10}, \cite{BreslerCartwrightTseToappear} and \cite{Razaviyayn1}; therefore and to keep the paper concise, their proofs are omitted. For the interested reader the proofs can be deduced following the mentioned references \cite{Yetis10}, \cite{BreslerCartwrightTseToappear}, \cite{Razaviyayn1} with a basic knowledge of algebraic geometry tools such as those described in \cite{Sh94a} and \cite{Whitney1972}.

\begin{lemma}\label{lem:HS}
Both $\mathcal{H}$ and $\mathcal{S}$ are complex manifolds, and
\[
\dim_\C\mathcal{H}=\sum_{(k,l)\in\Phi}(N_kM_l-1),\]
\[
\dim_\C\mathcal{S}=\sum_{k\in\Phi_R}d_k(N_k-d_k)+\sum_{l\in\Phi_T}d_l(M_l-d_l).
\]
\end{lemma}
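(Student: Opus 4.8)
The plan is to reduce everything to two standard facts: that the complex projective space $\P(W)$ of a finite--dimensional complex vector space $W$ is a complex manifold of complex dimension $\dim_\C W-1$, and that a finite Cartesian product of complex manifolds is again a complex manifold whose complex dimension is the sum of the complex dimensions of the factors. The first fact is classical: the standard affine charts, obtained by fixing one nonzero homogeneous coordinate, are biholomorphic to $\C^{\dim_\C W-1}$ and have holomorphic transition maps. The second follows immediately by taking products of charts. With these in hand, the proof is pure bookkeeping.

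First I would treat $\mathcal{H}$. By definition $\mathcal{H}=\prod_{(k,l)\in\Phi}\P(\mathcal{M}_{N_k\times M_l}(\C))$, and each factor is the projectivization of the complex vector space $\mathcal{M}_{N_k\times M_l}(\C)$, which has complex dimension $N_kM_l$. Hence each factor is a complex manifold of complex dimension $N_kM_l-1$, and since $\Phi$ is finite the product rule yields that $\mathcal{H}$ is a complex manifold with $\dim_\C\mathcal{H}=\sum_{(k,l)\in\Phi}(N_kM_l-1)$, as claimed.

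Next I would treat $\mathcal{S}$. Here I would simply invoke Lemma \ref{lem:grass}, which already asserts that $\G{a}{b}$ is a compact complex manifold of complex dimension $a(b-a)$. Applying this to each factor, $\G{d_k}{N_k}$ has complex dimension $d_k(N_k-d_k)$ and $\G{d_l}{M_l}$ has complex dimension $d_l(M_l-d_l)$. Since $\mathcal{S}$ is the Cartesian product of these Grassmannians over $k\in\Phi_R$ and $l\in\Phi_T$, and both index sets are finite, the product rule again gives that $\mathcal{S}$ is a complex manifold with $\dim_\C\mathcal{S}=\sum_{k\in\Phi_R}d_k(N_k-d_k)+\sum_{l\in\Phi_T}d_l(M_l-d_l)$.

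There is essentially no serious obstacle: the content is entirely bookkeeping once the complex--manifold structure of a single projective space and of a single Grassmannian is granted, the latter being supplied by Lemma \ref{lem:grass}. The only points deserving a word of care are that all products involved are finite, so that the product manifold structure is well defined and the dimensions add, and that all factors live over the same ground field $\C$, so the complex structures are compatible. It is worth recording in passing that each factor (a projective space or a Grassmannian) is compact, and compactness passes to a finite product; although the present statement only claims the manifold structure and the dimension, the compactness of $\mathcal{H}$ and $\mathcal{S}$ will be used crucially later in the paper.
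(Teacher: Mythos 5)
Your proposal is correct and follows the same route as the paper's own proof: both arguments observe that $\mathcal{H}$ and $\mathcal{S}$ are finite Cartesian products of complex manifolds (projective spaces of matrix spaces, and Grassmannians via Lemma \ref{lem:grass}) and then sum the dimensions of the factors. No substantive difference to report.
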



\begin{lemma}\label{lem:V}
The set $\CV$ is a complex smooth submanifold of $\mathcal{H}\times\mathcal{S}$ and its complex dimension is
\begin{align*}
\dim_\C\CV=\left(\sum_{(k,l)\in\Phi}N_kM_l-d_kd_l\right)+\left(\sum_{k\in\Phi_R}N_kd_k-d_k^2\right)\\+\left(\sum_{l\in\Phi_T}M_ld_l-d_l^2\right)-\sharp(\Phi).
\end{align*}
\end{lemma}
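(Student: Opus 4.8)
The plan is to exhibit $\CV$ as the zero locus of a holomorphic submersion and invoke the preimage theorem in the complex-analytic category. By Lemma \ref{lem:HS} the ambient space $\mathcal{H}\times\mathcal{S}$ is a complex manifold, so it suffices to work in local holomorphic charts: for each factor $\P(\mathcal{M}_{N_k\times M_l}(\C))$ I would use a standard affine chart (normalize one nonzero entry of $H_{kl}$ to $1$ via a linear functional $\ell$ with $\ell(H_{kl})\neq0$), and for each Grassmannian factor the usual chart in which $U_k,V_l$ have a canonical full-column-rank representative. In such a product chart around a point $(H,U,V)\in\CV$, the conditions (\ref{eq:1}) become a genuine holomorphic map
\[
F:(H,U,V)\longmapsto\left\{U_k^T H_{kl}V_l\right\}_{(k,l)\in\Phi}\in\prod_{(k,l)\in\Phi}\mathcal{M}_{d_k\times d_l}(\C),
\]
and locally $\CV=F^{-1}(0)$.

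The heart of the argument is to show that $F$ is a submersion at every point of $\CV$, and I would obtain this from the $H$-directions alone. Fix a link $(k,l)$. Because $U_k$ and $V_l$ represent points of Grassmannians they have full column rank $d_k$ and $d_l$; hence $U_k^T$ admits a right inverse $A$ (with $U_k^TA=I_{d_k}$) and $V_l$ admits a left inverse $B$ (with $BV_l=I_{d_l}$). For any prescribed $M\in\mathcal{M}_{d_k\times d_l}(\C)$ the choice $\dot{H}_{kl}=AMB$ gives $U_k^T\dot{H}_{kl}V_l=M$, so the linear map $\dot{H}_{kl}\mapsto U_k^T\dot{H}_{kl}V_l$ is onto $\mathcal{M}_{d_k\times d_l}(\C)$. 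As distinct links carry independent channel matrices, the combined $H$-derivative surjects onto the product target, of complex dimension $\sum_{(k,l)\in\Phi}d_kd_l$.

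The one subtlety---and the step I expect to be the main obstacle---is that in the projective chart the admissible directions $\dot{H}_{kl}$ are confined to the hyperplane $\ker\ell$, rather than ranging over all of $\mathcal{M}_{N_k\times M_l}(\C)$. I would resolve this exactly as in the Shub--Smale setup: since $(H,U,V)\in\CV$, the representative $H_{kl}$ itself satisfies $U_k^T H_{kl}V_l=0$, so $H_{kl}$ lies in the kernel of the surjection above, while $\ell(H_{kl})\neq0$ forces $H_{kl}\notin\ker\ell$. Hence that kernel is not contained in $\ker\ell$, so $\ker\ell$ together with the kernel spans all of $\mathcal{M}_{N_k\times M_l}(\C)$, and the restriction of the surjection to the chart directions remains surjective. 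Thus $F$ is a holomorphic submersion along $\CV$.

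With the submersion property in hand, the holomorphic preimage theorem shows that $\CV$ is a complex submanifold of $\mathcal{H}\times\mathcal{S}$ of complex codimension $\sum_{(k,l)\in\Phi}d_kd_l$, with tangent space $\ker DF$ at each point---precisely the linearized equations already recorded in Lemmas \ref{lem:Vstar} and \ref{lem:Vhat}. The dimension then follows by subtraction using Lemma \ref{lem:HS}:
\[
\dim_\C\CV=\sum_{(k,l)\in\Phi}(N_kM_l-1)+\sum_{k\in\Phi_R}d_k(N_k-d_k)+\sum_{l\in\Phi_T}d_l(M_l-d_l)-\sum_{(k,l)\in\Phi}d_kd_l,
\]
which regroups into the stated formula after writing $\sum_{(k,l)\in\Phi}1=\sharp(\Phi)$. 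As a consistency check, twice this value equals $\dim_\R\hat{\CV}$ from Lemma \ref{lem:Vhat} minus the real dimension $\sharp(\Phi)+\sum_{k\in\Phi_R}d_k^2+\sum_{l\in\Phi_T}d_l^2$ of the compact group $G=(U(1))^{\sharp(\Phi)}\times\prod_{k\in\Phi_R}U(d_k)\times\prod_{l\in\Phi_T}U(d_l)$ whose free action realizes $\CV$ as the quotient $\hat{\CV}/G$; that quotient picture is the alternative route to the same result.
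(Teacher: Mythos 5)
Your proof is correct, but it takes a genuinely different route from the paper's. The paper never works in charts on $\mathcal{H}\times\mathcal{S}$ directly: it first establishes smoothness of the affine/spherical model $\overset{\star}{\CV}\subseteq\tilde{\mathcal{H}}\times\hat{\mathcal{S}}$ (Lemma \ref{lem:Vstar}), cuts down to $\hat{\CV}$ by a transversal intersection with $\hat{\mathcal{H}}\times\hat{\mathcal{S}}$ (Lemma \ref{lem:Vhat}), and then obtains $\CV$ as the quotient of $\hat{\CV}$ by the free action of the compact group $G=\S(\C)^{\sharp(\Phi)}\times\prod_{k\in\Phi_R}\mathbb{U}_{d_k}\times\prod_{l\in\Phi_T}\mathbb{U}_{d_l}$ via the Quotient Manifold Theorem, finishing with a separate (and somewhat terse) appeal to algebraicity and the Segre embedding to upgrade real smoothness to complex smoothness. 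The essential mechanism is the same in both arguments --- surjectivity of the derivative in the $H$-directions alone, using that $U_k$ and $V_l$ have full column rank --- but you apply it downstairs in holomorphic charts, and your treatment of the projectivization (the admissible $\dot H_{kl}$ lie in a hyperplane $\ker\ell$, yet $H_{kl}\in\ker\psi\setminus\ker\ell$ forces $\ker\ell+\ker\psi$ to be everything, so surjectivity survives the restriction) is exactly the right fix and is correct. What your route buys is a cleaner conclusion: since $F$ is holomorphic, the preimage theorem immediately gives a \emph{complex} submanifold, with no need for the real-to-complex upgrade. What the paper's route buys is the explicit ladder $\overset{\star}{\CV}\to\hat{\CV}\to\CV$ with its tangent-space descriptions and group action, which is reused later (notably in the proof of Proposition \ref{prop:charsigma}, where one lifts tangent vectors to $\hat{\CV}$ and pushes curves down through the quotient). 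Both yield the same codimension $\sum_{(k,l)\in\Phi}d_kd_l$ and hence the stated dimension, and your final consistency check against Lemma \ref{lem:Vhat} is essentially a re-derivation of the paper's own computation.
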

\subsection{The critical points and values of $\pi_1$}\label{sec:alggeom}
We now study the sets of critical points and values of $\pi_1$.
\begin{lemma}\label{lem:aux}
Let $(H,U,V)\in\CV$ be fixed and let $\theta$ be the mapping defined in (\ref{eq:4}). Then, $\theta$ is surjective or not, independently of the chosen representatives of $(H,U,V)$.
\end{lemma}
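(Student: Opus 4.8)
The plan is to show that any two systems of affine representatives of the same point $(H,U,V)$ produce maps $\theta$ and $\theta'$ that differ only by pre- and post-composition with linear isomorphisms, so that one is surjective exactly when the other is. First I would record precisely the freedom in the choice of representatives. Since $H_{kl}$ lives in a projective space, $\tilde H_{kl}$ is determined up to a nonzero scalar, so a new representative replaces it by $\lambda_{kl}\tilde H_{kl}$ with $\lambda_{kl}\in\C\setminus\{0\}$. Since $U_k\in\G{d_k}{N_k}$ is the column space of $\tilde U_k$, a new representative replaces $\tilde U_k$ by $\tilde U_k P_k$ with $P_k$ an invertible $d_k\times d_k$ complex matrix, and similarly $\tilde V_l$ by $\tilde V_l Q_l$ with $Q_l$ invertible $d_l\times d_l$. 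These are the only changes allowed.

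Writing $\theta'$ for the map built from the primed representatives and substituting, its $(k,l)$ block becomes
\[
\lambda_{kl}\left(\dot{\tilde U}_k^{\,T}\tilde H_{kl}\tilde V_l Q_l+P_k^{T}\tilde U_k^{\,T}\tilde H_{kl}\dot{\tilde V}_l\right),
\]
to be compared with the block $\dot{\tilde U}_k^{\,T}\tilde H_{kl}\tilde V_l+\tilde U_k^{\,T}\tilde H_{kl}\dot{\tilde V}_l$ of $\theta$. The key observation, and the only slightly delicate point, is that the factors $P_k$ and $Q_l$ enter asymmetrically: $Q_l$ multiplies the first term on the right while $P_k^{T}$ multiplies the second term on the left. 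Hence a single isomorphism on the target will not do, and one must split the correction between the source and the target.

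Concretely, I would introduce the source isomorphism
\[
\Psi\bigl(\{\dot{\tilde U}_k\},\{\dot{\tilde V}_l\}\bigr)=\bigl(\{\dot{\tilde U}_k P_k^{-1}\},\{\dot{\tilde V}_l Q_l^{-1}\}\bigr)
\]
and the target isomorphism $\Xi$ acting on the $(k,l)$ block by $X\mapsto\lambda_{kl}\,P_k^{T}X\,Q_l$; both are invertible because the $P_k,Q_l$ are invertible and the $\lambda_{kl}$ nonzero. A direct block-by-block computation then verifies $\theta'=\Xi\circ\theta\circ\Psi$: applying $\Psi$ inserts $P_k^{-T}$ on the left of the first term and $Q_l^{-1}$ on the right of the second, and applying $\Xi$ left-multiplies by $P_k^{T}$ and right-multiplies by $Q_l$, so that the cancellations $P_k^{T}P_k^{-T}=I$ and $Q_l^{-1}Q_l=I$ leave exactly the new $Q_l$ on the first term and the new $P_k^{T}$ on the second, reproducing the block of $\theta'$ while $\Xi$ carries the scalar $\lambda_{kl}$.

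Since $\Psi$ and $\Xi$ are linear isomorphisms of the source and target of $\theta$, the composite $\theta'=\Xi\circ\theta\circ\Psi$ is surjective if and only if $\theta$ is, which is the assertion of the lemma. I do not anticipate any real obstacle: the whole argument reduces to exhibiting the two isomorphisms above and checking one matrix identity per index pair $(k,l)$, the only care being needed in matching the asymmetric placement of $P_k$ and $Q_l$ by distributing the correction between $\Psi$ and $\Xi$ rather than attempting to absorb it into a single map.
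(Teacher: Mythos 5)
Your proposal is correct and follows essentially the same route as the paper's proof: both identify the change of representatives as $\tilde H_{kl}\mapsto\lambda_{kl}\tilde H_{kl}$, $\tilde U_k\mapsto\tilde U_kP_k$, $\tilde V_l\mapsto\tilde V_lQ_l$ and show that the two versions of $\theta$ differ by pre-composition with the source isomorphism $(\dot{\tilde U}_k,\dot{\tilde V}_l)\mapsto(\dot{\tilde U}_kP_k,\dot{\tilde V}_lQ_l)$ and post-composition with the blockwise target isomorphism $X\mapsto\lambda_{kl}P_k^TXQ_l$. The paper phrases this as chasing a given target $\dot R_{kl}$ back through $\theta_1$ via $\lambda_{kl}^{-1}(P_k^T)^{-1}\dot R_{kl}Q_l^{-1}$ rather than writing the factorization $\theta'=\Xi\circ\theta\circ\Psi$ explicitly, but the computation is identical.
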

\begin{IEEEproof}
See Appendix \ref{Appendix6}.
\end{IEEEproof}
\begin{proposition}\label{prop:charsigma}
Let $(H,U,V)\in\CV$. Then, $(H,U,V)$ is a regular point of $\pi_1$ if and only if the mapping $\theta$ defined in (\ref{eq:4}) is surjective.
\end{proposition}
\begin{IEEEproof}
See Appendix \ref{Appendix7}.
\end{IEEEproof}


\begin{proposition}\label{prop:sigma}
The set $\Sigma'\subseteq\CV$ of critical points of $\pi_1$ is an algebraic subvariety of $\CV$. The set $\Sigma\subseteq \mathcal{H}$ of critical values of $\pi_1$ is a proper (i.e. different from the total) algebraic subvariety of $\mathcal{H}$.
\end{proposition}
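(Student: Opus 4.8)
The plan is to handle the two assertions in turn: the algebraicity of $\Sigma'$ will come from writing the critical-point condition as a rank drop of an explicit matrix, while the corresponding statements about $\Sigma$ will follow from the completeness (compactness) of the output space $\mathcal{S}$ together with Sard's theorem.

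First I would give $\Sigma'$ a pointwise description. By Proposition \ref{prop:charsigma}, a point $(H,U,V)\in\CV$ belongs to $\Sigma'$ precisely when the linear map $\theta$ of (\ref{eq:4}) fails to be surjective, and by Lemma \ref{lem:aux} this failure is independent of the chosen affine representatives. Fixing local affine charts for the projective and Grassmannian factors, the matrix $B$ of $\theta$ displayed in (\ref{BU})--(\ref{BV}) has entries that are polynomial (indeed bilinear) in the local coordinates of $(\tilde H,\tilde U,\tilde V)$. Non-surjectivity of $\theta$ is exactly rank-deficiency of $B$, i.e. the simultaneous vanishing of all maximal minors of $B$, each of which is a polynomial in those coordinates. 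Thus in every chart $\Sigma'$ is cut out by polynomial equations; because the underlying condition is representative-independent, these local descriptions define the same subset on overlaps and patch together, so $\Sigma'$ is an algebraic subvariety of $\CV$.

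Next I would show $\Sigma=\pi_1(\Sigma')$ is algebraic. Since $\Sigma'$ is closed in $\CV$ and $\CV$ is closed in $\mathcal{H}\times\mathcal{S}$ (Lemma \ref{lem:V}), $\Sigma'$ is a closed algebraic subset of $\mathcal{H}\times\mathcal{S}$. The factor $\mathcal{S}$ is a product of Grassmannians, hence a complete (projective) variety, so the projection $\mathcal{H}\times\mathcal{S}\to\mathcal{H}$ is a closed morphism (completeness / elimination theory) and therefore carries the closed set $\Sigma'$ to a Zariski-closed set $\Sigma\subseteq\mathcal{H}$. To see that $\Sigma$ is proper, I would invoke Sard's theorem: $\pi_1\colon\CV\to\mathcal{H}$ is a smooth map between smooth manifolds, so its set of critical values $\Sigma$ has Lebesgue measure zero in $\mathcal{H}$, regardless of the relative dimensions. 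Since $\mathcal{H}$ is an irreducible variety (a product of projective spaces), any proper closed subvariety has positive codimension and hence measure zero, so conversely a closed subvariety of measure zero cannot equal all of $\mathcal{H}$. Therefore $\Sigma\neq\mathcal{H}$, i.e. $\Sigma$ is a proper algebraic subvariety.

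The main obstacle is the closedness of the projected set $\Sigma$: the image of an algebraic set under $\pi_1$ need not be algebraic for a non-complete target, and the conclusion relies precisely on having realized the decoders and precoders inside the compact Grassmannians making up $\mathcal{S}$, rather than in affine matrix space. Once that completeness is in hand, the minor description of $\Sigma'$ and the Sard/measure-zero argument for properness are routine.
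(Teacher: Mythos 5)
Your proposal is correct and follows essentially the same route as the paper: both characterize $\Sigma'$ by the vanishing of the maximal minors of the matrix of $\theta$ (via Proposition \ref{prop:charsigma} and Lemma \ref{lem:aux}), both obtain algebraicity of $\Sigma=\pi_1(\Sigma')$ from the completeness of the projective/Grassmannian factors (the paper makes this explicit through the Segre embedding and the Main Theorem of Elimination Theory, which is the same fact you invoke as closedness of projections with complete fibers), and both conclude properness from Sard's theorem. No gaps.
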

\begin{IEEEproof}
See Appendix \ref{Appendix8}.
\end{IEEEproof}

\begin{corollary}\label{cor:connected}
$\mathcal{H}\setminus\Sigma$ is a connected set.
\end{corollary}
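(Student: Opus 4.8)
The plan is to exploit the crucial fact, emphasized in Section~\ref{sec:insights}, that complex algebraic subvarieties have \emph{even} real codimension, so that a proper one cannot disconnect its ambient manifold.

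First I would record that $\mathcal{H}=\prod_{(k,l)\in\Phi}\mathbb{P}(\mathcal{M}_{N_k\times M_l}(\C))$ is a finite product of complex projective spaces, hence a connected (indeed path-connected and irreducible) compact complex manifold. By Proposition~\ref{prop:sigma}, the set $\Sigma$ of critical values of $\pi_1$ is a \emph{proper} complex algebraic subvariety of $\mathcal{H}$. Since $\mathcal{H}$ is irreducible and $\Sigma\neq\mathcal{H}$, the variety $\Sigma$ has complex codimension at least $1$, and therefore real codimension at least $2$.

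The heart of the argument is then the general-position principle: removing a closed subset of real codimension $\geq 2$ from a connected smooth manifold leaves a connected set. I would apply this as follows. Given two points $p,q\in\mathcal{H}\setminus\Sigma$, use the path-connectedness of $\mathcal{H}$ to join them by a smooth path $\gamma$. Because $\Sigma$ may be singular, I would first stratify it into finitely many smooth locally closed submanifolds, each still of real codimension $\geq 2$ (its singular locus being a proper subvariety of strictly smaller dimension, and so on inductively). A transversality argument---Sard's theorem applied to a suitable perturbation family, as in the differential-topology tools collected in Appendix~\ref{app:difftop}---shows that a generic perturbation of $\gamma$, keeping the endpoints fixed, is transverse to every stratum. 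Since a $1$-dimensional domain transverse to a submanifold of codimension $\geq 2$ must miss it entirely, the perturbed path lies in $\mathcal{H}\setminus\Sigma$ and connects $p$ to $q$ there.

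The main obstacle I anticipate is the bookkeeping around the singularities of $\Sigma$: one must ensure that the stratification yields only pieces of real codimension $\geq 2$ and that transversality can be achieved simultaneously against all of them while holding the endpoints fixed. This is precisely where the complex structure is indispensable---had $\Sigma$ been merely a real-algebraic hypersurface (real codimension $1$), a generic path could not avoid it and the conclusion would fail, as already noted in the discussion preceding Section~\ref{sec:proofs}.
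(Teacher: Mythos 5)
Your proof is correct and follows the same route as the paper: both deduce from Proposition \ref{prop:sigma} that $\Sigma$ is a proper complex algebraic subvariety, hence of real codimension at least $2$, and conclude that removing it cannot disconnect the connected manifold $\mathcal{H}$. The only difference is that the paper simply cites this last fact as standard, whereas you supply its proof (stratification of $\Sigma$ plus a transversality perturbation of a connecting path), which is a correct and complete elaboration of the same argument.
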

\begin{IEEEproof}
From Proposition \ref{prop:sigma}, the set $\Sigma$ is a complex proper algebraic subvariety, therefore it has real codimension $2$ and removing it does not disconnect the space $\mathcal{H}$.
\end{IEEEproof}
\begin{corollary}\label{cor:ehr}
Assume that $\Sigma'$ is a proper algebraic subvariety of $\CV$ (equivalently, $\pi_1:\CV\ra\mathcal{H}$ has at least one regular point). Then, we are in the case $2)$ of our Theorem \ref{th:distinct}, that is for every $H\in\mathcal{H}$ the set $\pi_1^{-1}(H)$ is nonempty, and for $H\not\in\Sigma$ it is a smooth complex manifold of dimension $s$. Indeed, the restriction $V\setminus\pi_1^{-1}(\Sigma)\overset{\pi_1}{\ra}\mathcal{H}\setminus\Sigma$ is a fiber bundle.
\end{corollary}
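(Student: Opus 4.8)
The plan is to combine the regular value theorem, Ehresmann's fibration theorem, and the connectedness of $\mathcal{H}\setminus\Sigma$ (Corollary \ref{cor:connected}), with a concluding compactness argument to cover the singular locus. First I would record the dimension count: comparing Lemma \ref{lem:V} with Lemma \ref{lem:HS} one checks at once that $\dim_\C\CV-\dim_\C\mathcal{H}=s$, with $s$ as in (\ref{eq:2}). By hypothesis $\Sigma'$ is a \emph{proper} subvariety of $\CV$, so by Proposition \ref{prop:sigma} the critical value set $\Sigma$ is a proper algebraic subvariety of $\mathcal{H}$; in particular every $H\in\mathcal{H}\setminus\Sigma$ is a regular value of $\pi_1$. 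For such $H$ the regular value theorem shows that $\pi_1^{-1}(H)$, when nonempty, is a smooth complex submanifold of $\CV$ of the expected dimension $\dim_\C\CV-\dim_\C\mathcal{H}=s$. It remains to prove that the fibers over $\mathcal{H}\setminus\Sigma$ are actually nonempty, and then to upgrade nonemptiness to all of $\mathcal{H}$.

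For the fiber-bundle statement I would restrict $\pi_1$ to the map $\CV\setminus\pi_1^{-1}(\Sigma)\ra\mathcal{H}\setminus\Sigma$. This restricted map is a submersion, since every point lying above a regular value is a regular point of $\pi_1$ (by Proposition \ref{prop:charsigma} this is governed by surjectivity of $\theta$). It is also proper: $\CV$ is a closed submanifold of the compact space $\mathcal{H}\times\mathcal{S}$, hence compact, so $\pi_1$ is proper, and the preimage of any compact $K\subseteq\mathcal{H}\setminus\Sigma$ equals the compact set $\pi_1^{-1}(K)$, which is entirely contained in $\CV\setminus\pi_1^{-1}(\Sigma)$ because $K\cap\Sigma=\emptyset$. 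A proper submersion is a locally trivial fibration by Ehresmann's theorem (Appendix \ref{app:difftop}), which yields precisely the claimed fiber bundle $\CV\setminus\pi_1^{-1}(\Sigma)\overset{\pi_1}{\ra}\mathcal{H}\setminus\Sigma$.

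To see that this bundle has nonempty total space I would use that, on the open set $\CV\setminus\Sigma'$ of regular points, $\pi_1$ is a submersion and hence an open map; its image is therefore a nonempty open subset of $\mathcal{H}$ (nonempty because the hypothesis that $\Sigma'$ is proper forces $\CV\setminus\Sigma'\neq\emptyset$), and such an open set cannot be contained in the measure-zero proper subvariety $\Sigma$. Thus some regular point maps to an $H_1\notin\Sigma$, so the fiber over $H_1$ is nonempty. The key topological input now enters: since $\mathcal{H}\setminus\Sigma$ is connected by Corollary \ref{cor:connected} (where complexness, i.e. real codimension $2$, is essential), a fiber bundle over it has all fibers homeomorphic, so \emph{every} fiber over $\mathcal{H}\setminus\Sigma$ is nonempty. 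Finally, $\pi_1(\CV)$ is compact, hence closed, and contains the dense set $\mathcal{H}\setminus\Sigma$; therefore $\pi_1(\CV)=\mathcal{H}$ and $\pi_1^{-1}(H)\neq\emptyset$ for every $H\in\mathcal{H}$. This is exactly case $2(b)$ of Theorem \ref{th:mainmath}.

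The main obstacle is the passage from ``some fiber is nonempty'' to ``all fibers are nonempty.'' A naive dimension count (image open, of dimension at least $\dim\mathcal{H}$) only delivers nonemptiness on an open set, and over $\R$ the complement $\mathcal{H}\setminus\Sigma$ could be disconnected, so an analogous bundle could genuinely carry empty fibers over some components. The entire argument hinges on combining Ehresmann's theorem with the complex codimension-$2$ connectedness of $\mathcal{H}\setminus\Sigma$: together they let the single nonempty fiber propagate across the whole regular locus, after which compactness of $\CV$ disposes of the remaining singular values $H\in\Sigma$.
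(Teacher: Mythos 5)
Your proof is correct and follows essentially the same route as the paper's: the dimension count $s=\dim_\C\CV-\dim_\C\mathcal{H}$, properness of the restricted $\pi_1$ deduced from compactness of $\CV$, Ehresmann's theorem over the connected base $\mathcal{H}\setminus\Sigma$ (Corollary \ref{cor:connected}), and a compactness/closedness argument to extend nonemptiness of the fibers to $H\in\Sigma$. The only cosmetic differences are that you derive surjectivity over $\mathcal{H}\setminus\Sigma$ by combining openness of the image of the regular locus with homeomorphy of fibers of a bundle over a connected base, where the paper simply invokes the surjectivity clause built into its statement of Ehresmann's theorem, and that you phrase the final step via density and closedness of $\pi_1(\CV)$ rather than via an explicit accumulation-point argument.
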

\begin{IEEEproof}
From Corollary \ref{cor:connected},
the set $\mathcal{H}\setminus\Sigma$ of non-critical values of $\pi_1$ is a connected set. Moreover, we have:
\begin{itemize}
\item $\CV\setminus\pi_1^{-1}(\Sigma)$ is not empty by assumption,
\item $\pi_1\mid_{\CV\setminus\pi_1^{-1}(\Sigma)}$ is a submersion (because we have removed the set of critical points), and
\item it is proper: let $A\subseteq\mathcal{H}\setminus\Sigma\subseteq\mathcal{H}$ be a compact set. Then, $A$ is closed as a subset of $\mathcal{H}$ and from the continuity of $\pi_1$, so is $A'=\pi_1^{-1}(A)\subseteq \CV$. Now, $A'$ is a closed subset of the compact set $\CV$ and hence $A'$ is compact.
\end{itemize}
Ehresmann's Theorem then implies that $\pi\mid_{\CV\setminus\pi_1^{-1}(\Sigma)}$ is a fiber bundle, and in particular it is surjective. This proves that $\pi_1^{-1}(H)\neq\emptyset$ for every $H\in\mathcal{H}\setminus\Sigma$, and the Preimage Theorem implies that $\pi_1^{-1}(H)$ is a smooth submanifold of complex codimension equal to $\dim_\C\mathcal{H}$, thus of complex dimension equal to $\dim_\C\CV-\dim_\C\mathcal{H}=s$. Now, let $H\in\Sigma$ and let $H_i,i\geq1$ be a sequence of elements in $\mathcal{H}\setminus\Sigma$ such that $\lim_{i\mapsto\infty}H_i=H$. Let $(H_\infty,U_\infty,V_\infty)$ be an accumulation point of $(H_i,U_i,V_i)\in\CV$, which exists because $\CV$ is compact. Then, by continuity of $\pi_1$ we have that $\pi_1(H_\infty,U_\infty,V_\infty)=H$, that is $H_\infty=H$ and $(H_,U_\infty,V_\infty)\in\CV$. Thus, $\pi_1^{-1}(H)\neq\emptyset$ and we conclude that for every choice of $H_{kl}$ there exists at least one solution to (\ref{eq:1}) as claimed.
\end{IEEEproof}

\subsection{Proof of Theorem \ref{th:distinct}}

Recall from Lemma \ref{lem:HS} that the complex dimension of $\mathcal{H}$ is
\[
\dim_\C(\mathcal{H})=\sum_{(k,l)\in\Phi}(N_kM_l-1)=\sum_{(k,l)\in\Phi}N_kM_l-\sharp(\Phi).
\]
From this and from Lemma \ref{lem:V}, defining $s$ as in (\ref{eq:2}) we have
\[
s=\dim_\C\CV-\dim_\C\mathcal{H}.
\]
Assume that $\dim_\C(\mathcal{H})\leq \dim_\C(\CV)$ (equivalently, $s\geq0$). There are two cases:
\begin{enumerate}
\item if $\Sigma'=\CV$ then every point of $\CV$ is a critical point of $\pi_1$ and hence every element of $\pi_1(\CV)$ is a critical value of $\pi_1$. On the other hand, from Proposition \ref{prop:sigma}, $\Sigma$ is a proper algebraic subset of $\mathcal{H}$, thus a zero measure set of $\mathcal{H}$. This means that $\pi^{-1}(H)=\emptyset$ for every $H$ out of the zero--measure set $\Sigma$, thus we are in case 1) of Theorem \ref{th:distinct}.
\item otherwise, $\Sigma'$ is a proper subset of $\CV$, and from Corollary \ref{cor:ehr} we are in case 2) of Theorem \ref{th:distinct}.

We now prove each of the following implications:
\begin{enumerate}

\item[a)$\Rightarrow$b):] assume that $\pi_1^{-1}(H)\neq\emptyset$ for every $H\in\mathcal{H}$ . From Sard's theorem, for almost every $H\in\mathcal{H}$, $\pi_1$ is a submersion at every point in $\pi_1^{-1}(H)$ and from Proposition \ref{prop:charsigma} the mapping (\ref{eq:4}) defines a surjective linear mapping.
\item[b)$\Rightarrow$c):] trivial.
\item[c)$\Rightarrow$a):] from Proposition \ref{prop:charsigma}, $\pi_1$ has a regular point, and from Corollary \ref{cor:ehr}, a) holds.
\end{enumerate}

\end{enumerate}

This finishes the proof.\hfill\IEEEQEDclosed

Finally, the proof of Lemma \ref{lem:numbersolutions} stating when a feasible IA problem has a finite number of solutions is as follows: assume that $s=0$, or equivalently $\dim_\C(\mathcal{H})=\dim_\C(\CV)$, and that we are still in case $2(b)$ of Theorem \ref{th:distinct}. Then, from Corollary \ref{cor:covering} (see Appendix \ref{app:difftop}) all the elements in $\mathcal{H}\setminus\Sigma$ have the same (finite) number, say $C$, of preimages by $\pi_1$. This proves the assertion of Lemma \ref{lem:numbersolutions}.

\begin{remark}
\label{th2_remarknew}
It is important for our analysis that the input and output spaces are defined over the complex numbers, not over the reals. Indeed, a key property in proving our main results is that the critical points and values of
$\pi_1$ are algebraic sets. In the complex case this means they have (real) codimension $2$ and hence do not disconnect their ambient spaces.
In the real case, these sets may have real codimension $1$ and they may thus disconnect their ambient spaces. More specifically, Corollary
\ref{cor:connected} may fail to hold in the real case. As a consequence, one cannot apply Ehresman's Theorem and a more delicate analysis would be required in this case.
\end{remark}

\subsection{Proof of Theorem \ref{th:Turing}}
Assume that parameters $d_j,M_j,N_j,\Phi$ are chosen such that the associated MIMO scenario is feasible. First, let us remind from Section \ref{sec:IAinverse} that we may choose $U_k$ and $V_l$ as those in \eqref{eq:form1}, and the MIMO channels as in \eqref{eq:form2} which, for convenience, we show again:
\[
H_{kl}=\begin{pmatrix}0_{d_k\times d_l}&A_{kl}\\B_{kl}&C_{kl}\end{pmatrix},\qquad (k,l)\in\Phi.
\]
Now, let $h\geq1$ be an integer number and let those matrices have coefficients of the form
\begin{equation}
\frac{a}{h}+\sqrt{-1}\frac{b}{h},
\end{equation}
with denominator $h$ and numerators $a,b$ in $[0,h)\cap \mathbb{Z}$. As the system is generically feasible, for most choices of these matrices $A_{kl},B_{kl},C_{kl}$, we will have $(H,U,V)\not\in\Sigma$, that is the linear mapping in \eqref{eq:linear_mapping} will be surjective. Moreover, the mapping in \eqref{eq:linear_mapping} is independent of the entries $C_{kl}$, so we can simply say that for most choices of $A_{kl},B_{kl}$ the mapping will be surjective. The merit of Theorem \ref{th:Turing} is to quantify this ``for most'', which we do following the arguments in \cite[Sec. 17.4]{BlCuShSm98}, which in turn are inspired by a celebrated result by Milnor bounding the number of connected components of semi--algebraic sets. We start by studying the set
\begin{align*}
\mathcal{Z}=\{(A_{kl},B_{kl})\in [0,1)^{2\sum_{(k,l)\in\Phi}(N_k-d_k)d_l+(M_l-d_l)d_k}:\\ \text{ the linear mapping in \eqref{eq:linear_mapping} is not surjective}\}.
\end{align*}
Note that we consider $\mathcal{Z}$ as a subset of $[0,1)^{2\sum_{(k,l)\in\Phi}(N_k-d_k)d_l+(M_l-d_l)d_k}$, that is a real set, by considering the real and complex parts of each entry of each $A_{kl}$ and $B_{kl}$ as a real number in $[0,1)$.
\begin{lemma}\label{lem:Milnor}
Let $\kappa(\mathcal{Z})$ be the maximum number of connected components (intervals) of $\mathcal{Z}\cap L$ where $L$ is some line parallel to some axis. Then,
\[
\kappa(\mathcal{Z})\leq 1.
\]
\end{lemma}
\begin{IEEEproof}
Let $L$ be a line parallel to some axis. That is, $L$ is the set of all $(A_{kl},B_{kl}),\;(k,l)\in\Phi$, such that all entries of $A_{kl}$ and $B_{kl}$ are fixed save for one of them (the real or the complex part of some entry, call it $\lambda$, of some $A_{kl}$ or some $B_{kl}$). The set $\mathcal{Z}\cap L$ is defined by $\rank(\theta)<\sum_{(k,l)\in\Phi}d_kd_l$, equivalently it is given by
\[
p(\lambda)=\sum_{J}|\det(M)|^2=0,
\]
where $J$ runs over all the possible minors of maximal size contained in the matrix of mapping \eqref{eq:linear_mapping} and $\det(M)$ are those minors. This is thus one real, non--negative equation of degree at most $2$ in $\lambda$. There are several possibilities:
\begin{itemize}
\item Case $p(\lambda)=0$: the set $\mathcal{Z}\cap L=L$ has one connected component.
\item Case $p(\lambda)\neq0$ for all $\lambda\in[0,1)$: the set $\mathcal{Z}\cap L=\emptyset$ has zero connected components.
\item Case $p(\lambda)$ has a finite number of zeros in $[0,1)$: As $p(\lambda)$ is non--negative of degree $2$, it has at most one isolated zero. Thus, in this case $\mathcal{Z}\cap L$ consists of one point and thus has one connected component.
\end{itemize}
In any case, $\mathcal{Z}\cap L$ has at most $1$ connected component.
\end{IEEEproof}
\begin{lemma}\label{lem:discrepancy}
For any $h\geq1$, the cardinal of the set of values of $A_{kl}$ and $B_{kl}$ with entries of the form $\frac{a}{h}+\sqrt{-1}\frac{b}{h}$, $0\leq a,b< h$ such that the mapping in \eqref{eq:linear_mapping} with $C_{kl}=0$ is not surjective is at most
\[
P_h=\left(\frac{2}{h}\sum_{(k,l)\in\Phi}(N_k-d_k)d_l+(M_l-d_l)d_k\right)Q_h,
\]
where $Q_h$ is the total number of $A_{kl},B_{kl}$ with such entries, that is
\[
Q_h=h^{2\sum_{(k,l)\in\Phi}(N_k-d_k)d_l+(M_l-d_l)d_k}
\]
\end{lemma}
\begin{IEEEproof}
From \cite[Th. 3, p. 327]{BlCuShSm98} (note the difference in the notation: our $h$ is $1/h$ in \cite{BlCuShSm98}), we know that
\[
\left|P_h-vol(\mathcal{Z})Q_h\right|\leq \frac{D}{h}\kappa(\mathcal{Z})Q_h,
\]
where $vol(\mathcal{Z})=0$ is the volume (Lebesgue measure) of the proper algebraic variety $\mathcal{Z}$, and $D$ is the (real) dimension of the set of $(A_{kl},B_{kl})$, which is equal to $D=2\sum_{(k,l)\in\Phi}(N_k-d_k)d_l+(M_l-d_l)d_k$. The lemma follows from Lemma \ref{lem:Milnor}.
\end{IEEEproof}
We now prove Theorem \ref{th:Turing}. Let $A_{kl},B_{kl}$ be chosen at random with i.i.d. entries of the form $a+\sqrt{-1}b$, $a,b\in\Z$, $0\leq a,b< h$. Then, the mapping in \eqref{eq:linear_mapping} is surjective if and only if the same mapping but with entries $\frac{a}{h}+\sqrt{-1}\frac{b}{h}$ is surjective, because we are only multiplying each $A_{kl}$ and $B_{kl}$ by $h^{-1}$. From Lemma \ref{lem:discrepancy}, the probability that the linear mapping \eqref{eq:linear_mapping} is not surjective is at most
\[
\frac{P_h}{Q_h}=\frac{2}{h}\sum_{(k,l)\in\Phi}(N_k-d_k)d_l+(M_l-d_l)d_k.
\]
By choosing
\[
h=8\sum_{(k,l)\in\Phi}(N_k-d_k)d_l+(M_l-d_l)d_k,
\]
we guarantee that with probability at least $3/4$ the answer of the algorithm is {\em feasible}.
As already mentioned, one can repeat the test $k$ times to get the probability of having a wrong answer decreasing as $1/4^k$. Note that the integers defining the mapping (\ref{eq:4}) are of bit length bounded above by $1+\log_2 h$, a quantity which is logarithmic in $\sharp(\Phi)$ and $d_j,M_j,N_j$. Hence, the exact arithmetic test can be carried out in time which is polynomial in the same quantities.


\section{Discussion and Computer Simulations}
\label{sec:simulations}
\subsection{Some results for arbitrary interference channels}
In this subsection, we first show that the proposed feasibility test provides consistent results in agreement with those found in the literature. Moreover, we also discuss scenarios for which the existing DoF outer bounds are not tight. The feasibility test has been evaluated on a vast amount of scenarios, including those covered in \cite{BreslerCartwrightTseToappear} and \cite{Razaviyayn1}, and since its results have always been consistent with all previously known results, here we only show a selection of the most representative cases. Some additional examples can be found in \cite{Gonzalez2012}.\footnote{The reader is invited to test the feasibility of an arbitrary alignment problem at \url{http://gtas.unican.es/IAtest} where the Matlab source code for the floating point test is also provided.}

\begin{example_scenario}
First, consider the simple $(3\times 3,2)^2$ system, which has been already studied in \cite{Yetis10}. Although this system is proper, it is infeasible since it does not satisfy the 2-user outer bound given by \eqref{eq:DoF}. Our test also shows that this system is infeasible.
\end{example_scenario}


\begin{example_scenario}
Consider the $3$-user system $\prod_{j=1}^3(7\times13,d_j)$ where the stream distribution among users is not specified. The outer bound \eqref{eq:DoF_K} establishes that total number of DoF cannot exceed $19.5$ in this network, whereas the properness condition in \cite{Yetis10} guarantees that the system is infeasible if more than $5$ DoF per user are transmitted (i.e. a total of $15$ DoF). However, the results in \cite{Bresler2011_3user}, \cite{Jafar2011_3user} provide an even tighter bound which shows that the system is infeasible if $5$ streams per user are transmitted. Our test indicates that the $(7\times13,5)^3$ system is infeasible whereas the system $(7\times13,4)(7\times13,5)^2$ is feasible, which allows us to claim that the maximum total DoF for this network is 14.
\end{example_scenario}

\begin{example_scenario}
The $(4\times4,2)(5\times3,2)(6\times2,2)$ system, which was studied in \cite{Slock10}, satisfies \eqref{eq:DoF} for all 2-user pairs and satisfies all known outer bounds. The proposed test establishes that this system is infeasible.
\end{example_scenario}

\begin{example_scenario}
A controversial example can also be found in \cite{Slock10}: the $(3\times4,2)(1\times3,1)(10\times4,2)$ system. The test proposed in \cite{Slock10} indicates that this system is feasible, while our test establishes that it is infeasible. In our view, the test in \cite{Slock10} gives only necessary (but not sufficient) conditions for feasibility. As our analysis has shown, it is not possible to solve the feasibility problem just by counting variables in all subsets of IA equations, a much more subtle analysis is needed. Similar examples are the $(4\times8,3)^3$ and $(5\times11,4)^3$ networks, which are infeasible according to our test (moreover they violate the outer bound \eqref{eq:DoF_K}) while the test in \cite{Slock10} states they are feasible. We also have numerical evidence that this system is infeasible since iterative algorithms such as \cite{Gomadam11}\cite{Heath09} have not been able to find a solution for this scenario\footnote{Notice, however, that alternating minimization algorithms cannot guarantee convergence to a global minimum, so it cannot be used as a feasibility test.}.
\end{example_scenario}

\begin{example_scenario}
Now, let us consider the $(3\times4,2)(1\times3,1)(10\times4,2)$ system studied in \cite{Yetis10}. It is proper but infeasible, since it violates the $2$-user cooperative outer bound (it is equivalent to the $(4\times7,3)(10\times4,2)$ network). Our test also shows that the system is infeasible.
\end{example_scenario}

\begin{example_scenario}
Consider the $(2\times2,1)^3(3\times5,1)$ system also studied in \cite{Yetis10}. Checking the properness of this scenario involves checking the properness of all the possible subsets of equations. It can be found that the subset of equations which is obtained by shutting down the fourth receiver is improper, therefore the system is infeasible. Our test provides the same result.
\end{example_scenario}

\begin{example_scenario}
A final interesting example is the $(2\times2,1)(5\times5,2)^2(8\times8,4)$ system, which is feasible according to the proposed test. This system has been built by taking the symmetric $(5\times5,2)^4$ system, which is known to be feasible, and transferring 6 antennas from the first user to the fourth. It must noticed that while the total amount of antennas in the network remains constant, the redistribution of antennas has allowed to achieve a total of $9$ DoF instead of the $8$ DoF achieved in the symmetric case. This example gives new evidence for the conjecture settled in \cite{Slock10}, which asserts that for a given total number of DoF, $d_{tot}=\sum_k d_k$, there exist feasible asymmetric MIMO interference systems (that is, with unequal antenna and stream distribution among the links) such that the total number of antennas, $\sum_k(M_k+N_k)$, is less than number of antennas of the smallest symmetric system ($M_k=M$, $N_k=N$, and $d_k=d_{tot}/K$ that can achieve $d_{tot}$.
\end{example_scenario}

Let us finally point out that, in all cases in which our feasibility test was positive, we were able to find an IA solution using the iterative interference leakage minimization algorithm proposed in \cite{Gomadam11}\cite{Heath09}.

\subsection{On the DoF of symmetric $M\times N$ MIMO interference channels}
We have previously shown that the proposed test is in agreement with known results, including those which refer to fully asymmetric systems. Additionally, by using the aforementioned test it is possible to extensively verify conjectures, disprove them or provide additional insights on how the DoF for general interference channels should behave. One such example is the number of linear DoF of the symmetric $K$-user $M\times N$ MIMO interference channel, $(M\times N,d)^K$, which is unknown for $K\geq 4$. For convenience, we use the concept of spatially-normalized degrees of freedom, $d^\star$, introduced in \cite{Jafar2011_3user_full}. When $d^\star$ is an integer, we have an exact DoF characterization. In general it will be a rational number and the actual DoF without spatial extensions can be obtained from it as $d=\lfloor d^\star\rfloor$. To understand the concept of spatially-normalized DoF, let us express $d^\star$ in its rational form $p/q$. Then, scaling the number of antennas by $q$, we have a $qM \times qN$ MIMO interference channel, for which the value $d^\star=p$ is achievable.

We must point out that for the particular case of $K=3$ the linear DoF have been recently obtained \cite{Jafar2011_3user_full,Bresler2011_3user}. In particular, the DoF characterization comprises a piece-wise linear mapping with infinitely many linear intervals over the range of the parameter $\gamma=M/N$ where $M\leq N$ is assumed w.l.o.g. Specifically, the linear DoF are depicted in Fig.~\ref{fig:3user_dof} and are described by the following expression:
\begin{align}\label{eq:compactdof_3user}
d^\star &=&\left\{\begin{array}{ccc}\frac{p}{2p-1}M,&~& \gamma'(p)\leq\frac{M}{N}\leq \gamma(p)\\
\frac{p}{2p+1}N,&~& \gamma(p)\leq\frac{M}{N}\leq
\gamma'(p+1)\end{array} \right. \enskip p\in \mathbb{Z}^+,
\end{align}
where $\gamma'(p)=\frac{p-1}{p}$ and $\gamma(p)=\frac{2p-1}{2p+1}$.

\newlength{\mylength}
\setlength{\mylength}{0.8\textwidth}
\begin{figure*}[p!]
\centering
\includegraphics[width=0.75\textwidth]{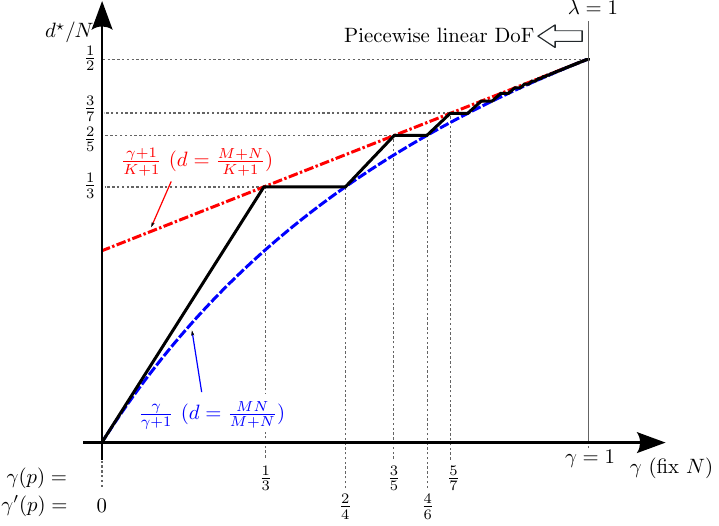}
\caption{Linear degrees of freedom for the $3$-user interference channel as proved in \cite{Jafar2011_3user_full}: $d^{\star}/N$ as a function of $\gamma=M/N$. This figure is included to illustrate the analogy with the results for $K\ge 4$ depicted in Fig. \ref{fig:4user_dof}.}
\label{fig:3user_dof}
\centering
\hbox{\hspace{2.5cm}\includegraphics[width=0.8\textwidth]{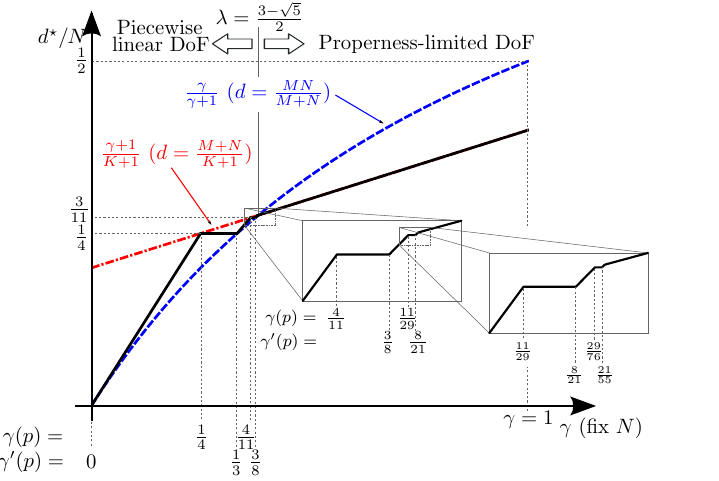}}
\caption{Conjectured linear degrees of freedom for the $4$-user interference channel: $d^{\star}/N$ as a function of $\gamma=M/N$. Similar figures are obtained for all $K$.}
\label{fig:4user_dof}
\end{figure*}

When $K\geq 4$ the exact number of linear DoF is unknown. However, from an information theoretic perspective, and not being restricted to any particular alignment scheme, the DoF have been almost completely characterized by Jafar et al. \cite{Wang_Sun_Jafar_2012} as
\begin{eqnarray}\label{eq:d_IT_Kuser}
d_{IT} &=&\left\{\begin{array}{ccc}M,&~& 0\leq\frac{M}{N}<\frac{1}{K}\\
\frac{N}{K},&~& \frac{1}{K}\leq\frac{M}{N}\leq\frac{1}{K-1}\\
\frac{(K-1)M}{K},&~& \frac{1}{K-1}\leq\frac{M}{N}\leq\frac{K}{K^2-K-1}\\
\frac{(K-1)N}{K^2-K-1},&~& \frac{K}{K^2-K-1}\leq\frac{M}{N}\leq\frac{K-1}{K(K-2)}\\
\frac{MN}{M+N},&~& \frac{K-2}{K^2-3K+1}\leq\frac{M}{N}\leq 1,
\end{array}\right.
\end{eqnarray}
but they are still unknown in the excluded interval, i.e. $\frac{M}{N} \in \left(\frac{K-1}{K(K-2)},\frac{K-1}{K^2-3K+1}\right)$, where they are believed to be $\frac{MN}{M+N}$ as  conjectured in \cite{Wang_Sun_Jafar_2012}. Obviously, the information theoretic DoF is a, sometimes tight, upper bound of the linear DoF without symbol extensions but the extent to which they differ remains unclear.

In order to shed some light on this issue we have extensively executed our test for all the scenarios with $M,N \in [1,100]$ and $K\geq 3$. Our results show two different operating regimes depending on whether $\frac{MN}{M+N}\geq \frac{M+N}{K+1}$ or not. In other words, the regime of operation depends on whether the ratio $\gamma=M/N$ is above or below a threshold value $\lambda=1/2\left(K-1-\sqrt{(K-1)^2-4}\right)$. As an example, Fig.~\ref{fig:4user_dof} shows the linear DoF values per user normalized by $N$ versus the ratio $\gamma=M/N$ for $K=4$. Now, we describe the DoF behaviour for these two regimes in detail for general $K$.

\begin{enumerate}
\item \textit{Regime 1 (Piecewise linear DoF), $\gamma \leq \lambda$:}
\begin{enumerate}
\item We have verified that the linear DoF are given by \eqref{eq:d_IT_Kuser} when
  $0\leq \gamma \leq \frac{(K-1)}{K(K-2)}$ which confirms that in this case the information theoretic DoF can be achieved by linear alignment without symbol extensions.
  \item More interestingly, when $\frac{(K-1)}{K(K-2)} < \gamma \leq \lambda$ we have been able to find several counterexamples that exceed the conjectured value of $\frac{MN}{M+N}$, which was believed to be the information theoretic DoF value. As examples, we enumerate the following feasible systems $(11\times 29,8)^4$, $(44\times 117,32)^4$, $(19\times 71,15)^5$ and $(29\times 139,24)^6$, which clearly exceed the conjectured DoF per user: $7.975$, $31.975$, $14.989$ and $23.994$, respectively. In addition, all systems in this interval seem to follow the same piecewise linear DoF trend described before for the $0\leq \gamma \leq \frac{(K-1)}{K(K-2)}$ region and for the $3$-user interference channel. More precisely, the spatially normalized DoF can be written analogously to \eqref{eq:compactdof_3user} as:
  \par\rlap{\parbox{\columnwidth}
  {
\begin{align}\label{eq:compactdof_Kuser}
d^\star &=&\left\{\begin{array}{ccc}\frac{\gamma(p)+1}{\gamma(p)(K+1)}M,& \gamma'(p)\leq\frac{M}{N}\leq \gamma(p)\\
\frac{\gamma(p)+1}{K+1}N,& \gamma(p)\leq\frac{M}{N}\leq \gamma'(p+1)\end{array} \right. \enskip p\in \mathbb{Z}^+.
\end{align}}
}
where
  \begin{equation}\label{eq:gamma_functions}
\gamma(p)=\frac{\displaystyle \sum_{k=-(p-1)}^{(p-1)}\lambda^k}{\displaystyle \sum_{k=-p}^{p}\lambda^k}\enskip \text{and}\enskip \gamma'(p)=\lambda\:\frac{\displaystyle  \sum_{k=0}^{p-2}\lambda^{2k}}{\displaystyle  \sum_{k=0}^{p-1}\lambda^{2k}}.
\end{equation}
Intuitively, $\gamma(p)$ gives the values of $M/N$ for which there are no antenna redundancies at either side of the link whereas $\gamma'(p)$ gives those for which there is maximum redundancy\footnote{If $\lambda\neq 1$ (i.e. $K\neq 3$) both functions can be simplified: $\gamma(p)=\lambda\frac{1-\lambda^{2p-1}}{1-\lambda^{2p+1}}$ and $\gamma'(p)=\lambda\frac{1-\lambda^{2p-2}}{1-\lambda^{2p}}$.}. Both functions get asymptotically closer as $p$ increases since $\lim_{p\to\infty} \gamma(p)=\lim_{p\to\infty} \gamma'(p)=\lambda$. Specific details on the reasoning leading to \eqref{eq:compactdof_Kuser} are relegated to Appendix \ref{Appendix_conjecture}.
\end{enumerate}
It is worth pointing out that \eqref{eq:compactdof_Kuser} generalizes \eqref{eq:compactdof_3user} and is also consistent with the information theoretic bound in \eqref{eq:d_IT_Kuser}. In fact, for the $3$-user channel, $\lambda$ takes its maximum value, i.e. $\lambda=1$ meaning that the entire $\gamma$ range, $\gamma \in (0,1]$, is covered by this piecewise linear regime as shown in Fig.~\ref{fig:3user_dof}. For $K>3$, the value of $\lambda$ is strictly lower than $1$, approaching to $0$ as $K$ tends to infinity.

\item \textit{Regime 2 (Properness-limited DoF), $\gamma \geq \lambda$:}
For $\gamma$ values above the threshold we have observed that the linear DoF are always given by
\begin{equation}
\label{eq:conjecture_regime2}
d^{\star}=\frac{M+N}{K+1}.
\end{equation}
This means the system is limited by the properness criterion and no proper but infeasible scenarios have been found in this regime.
\end{enumerate}

To sum up, our numerical results lead us to conjecture that the linear DoF of the symmetric $K$-user interference channel ($K\geq3$) are completely characterized by these two regimes thus generalizing the existing results for the $3$-user channel. Formally, it can be written as follows.
\begin{conjecture}
For the $K$-user ($K\geq3$) $M\times N$ MIMO interference channel, the spatially-normalized DoF value per user achievable with linear IA and without time/frequency symbol extensions is given by:

\begin{eqnarray}
\label{eq:conjecture}
d^\star &=&\left\{\begin{array}{ccc} \eqref{eq:compactdof_Kuser}, &~& \frac{M}{N}\leq \lambda\\
\eqref{eq:conjecture_regime2}, &~& \frac{M}{N}\geq \lambda \end{array} \right. \quad,
\end{eqnarray}
where $\lambda=1/2\left(K-1-\sqrt{(K-1)^2-4}\right)$.
\end{conjecture}

It is worth mentioning that during the review process of this paper we have been aware of an independent related work by Liu and Yang \cite{Liu2013} on the degrees of freedom of the symmetric MIMO interference broadcast channel. Their results for the piecewise-limited regime ($\frac{M}{N}\leq \lambda$), although obtained by totally different means, are in perfect agreement with ours. Furthermore, their results when $\frac{M}{N}\geq \lambda$ are based on the test proposed herein and lead them to conjecture, as in \eqref{eq:conjecture}, that the \textit{properness} condition is indeed necessary and sufficient in this regime. This fact still remains unproved.
\section{Conclusions}
\label{sec:conclusions}

This paper gives some new results on the feasibility of interference alignment on the signal space for the K-user MIMO channel with constant coefficients. We use the fact that the input, output and solution variety sets for the IA problem are smooth compact algebraic manifolds. Of particular importance and interest is the study of the projection of the solution variety into its first coordinate and the analysis of their tangent spaces. We prove that for an arbitrary MIMO interference channel IA is feasible iff the algebraic dimension of the solution variety is larger than or equal to the dimension of the input space {\it and} the linear mapping between the tangent spaces of both smooth manifolds given by the first projection is generically surjective, and we provide a simple linear algebra routine, with running time polynomial in the input parameters $d_j,M_j,N_j,\sharp(\Phi)$, to decide if the scenario is feasible. The matrix representing this linear mapping can be easily obtained and the feasibility of IA amounts to checking whether this matrix is full rank or not. Proper but infeasible systems correspond to cases in which the dimension of the solution variety coincides with the dimension of the input space, but the mapping is not surjective, that is, the solution variety is mapped to a zero-measure set of MIMO interference channels. We have evaluated our feasibility test on many examples, some of them served to corroborate known results, others showed the non-tightness of existing DoF outer bounds for this setting or provided evidence on the advantages of unequal antennas and stream distribution for DoF maximization. Additionally, an extensive execution of our test on symmetric scenarios allowed us to establish a conjecture on the DoF of the $K$-user interference channel which generalizes already known results for $K=3$.



%
%
\appendices
\section{Review of some results from algebraic geometry and differential topology}
\label{app:difftop}
A key point of our analysis is a subtle use of the notion of compactness of spaces. We introduce this fundamental mathematical concept in the following lines. Recall that a topological space $X$ is just a set where a collection $\tau\subset\{\text{subsets of }X\}$ of ``open subsets'' has been chosen, satisfying three conditions:
\begin{enumerate}
\item the empty set and the total set $X$ are in $\tau$,
\item the intersection of a finite number of elements in $\tau$ is again in $\tau$, and
\item the union of any collection of elements in $\tau$ is again in $\tau$.
\end{enumerate}
For example, $\R^n$ with the usual definition of ``open set'' is a topological space. Any subset $A\subseteq\R^n$ (for example, a sphere or a linear subspace) then inherits a structure of topological space, with open sets being those obtained by intersecting an open set of $\R^n$ with $A$. More generally, any (smooth) manifold is by definition a topological space and any subset of a manifold inherits a structure of topological space.

A subset $A\subseteq X$ of a topological space is called {\em compact} if the following property holds: given any collection of open sets of $X$ such that their union contains $A$, there exist a finite subcollection which also contains $A$. This is not a particularly intuitive definition, but it permits to obtain many results, notoriously a fundamental result due to Ehressman that will be recalled below. From the Heine--Borel Theorem, a subset of $\R^n$ or $\C^n$ is compact if and only if it is closed (in the usual definition) and bounded. Thus, the sphere is compact but a linear subspace is not.

Using the definition, note that a given manifold $X$ is itself compact if any collection of open subsets whose union is $X$ has a finite subcollection that covers $X$. For example, $\R^n$ is not compact (the union for $m\geq1$ of open balls of radius $m$ covers $\R^n$ but no finite subcollection of these balls covers $\R^n$). It is not obvious but it is true that the projective spaces $\P(\R^n)$ and $\P(\C^n)$ are both compact. We will finally use the following basic fact: if $X$ is compact and $A\subseteq X$ is closed, then $A$ is compact as well.

We will also use some basic notions related to regular mappings: let $\varphi:X\ra Y$ be a smooth mapping where $X$ and $Y$ are smooth manifolds. For every $x\in X$, the derivative is a linear mapping between the tangent spaces, $D\varphi(x):T_xX\ra T_{\varphi(x)}Y$. A {\em regular point} of $\varphi$ is a point such that $D\varphi(x)$ is surjective (which requires $\dim(X)\geq\dim(Y))$. A {\em critical point} is a $x\in X$ which is not regular. Similarly, a {\em regular value} of $\varphi$ is an element $y\in Y$ such that for every $x\in X$ such that $\varphi(x)=y$, $x$ is a regular point. That is, $y\in Y$ is a regular value if every point mapped to $y$ is a regular point. This includes, by convention, the case $\varphi^{-1}(y)=\emptyset$. If $y$ is not a regular value, we say that it is a {\em critical value}. Note that
\[
\varphi\{\text{critical points of }\varphi\}=\{\text{critical values of }\varphi\}.
\]
If $x$ is a regular point of $\varphi$ we say that $\varphi$ is a {\em submersion} at $x$. If $\varphi$ is a submersion at every point (equivalently, every $x\in X$ is a regular point of $\varphi$) then we simply say that $\varphi$ is a submersion.


We now recall a few results from regular mappings; the reader may find them for example in \cite[Ch. 1]{GuilleminPollack1974} or \cite{Schwartz1968}:
\begin{theorem}[Preimage Theorem]
If $Y_0\subseteq Y$ is a submanifold such that every $y\in Y_0$ is a regular value of $\varphi:X\ra Y$ then $Z=\varphi^{-1}(Y_0)$ is a submanifold of $X$ of dimension $\dim (Z)=\dim(X)-\dim(Y)+\dim(Y_0)$. Moreover, the tangent space $T_xZ$ at $x$ to $Z$ is the kernel of the derivative $D\varphi(x):T_xX\ra T_yY$.
\end{theorem}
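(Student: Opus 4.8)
The plan is to reduce the statement to the most basic regular-value case, which itself rests on the Implicit Function Theorem, and then to upgrade from a single regular value to the submanifold $Y_0$ by means of a local defining submersion. The whole argument is local on $X$, so I would fix $x\in Z$, set $y=\varphi(x)\in Y_0$, and argue in a neighborhood of $x$.

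First I would record the local structure of $Y_0$. Writing $n=\dim Y$ and $k=\dim Y_0$, the fact that $Y_0$ is a $k$-dimensional submanifold of $Y$ means that $y$ has an open neighborhood $V\subseteq Y$ and a submersion $\psi:V\ra\R^{n-k}$ with $Y_0\cap V=\psi^{-1}(0)$; this slice-chart description of a submanifold as a regular zero level set is itself a consequence of the Implicit Function Theorem. I would then form the composite $g=\psi\circ\varphi:\varphi^{-1}(V)\ra\R^{n-k}$ and observe that $Z\cap\varphi^{-1}(V)=g^{-1}(0)$, which converts the preimage-of-a-submanifold problem into a preimage-of-a-point problem for $g$.

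The key step is to verify that $0$ is a regular value of $g$. For any $x'$ with $g(x')=0$ we have $\varphi(x')=y'\in Y_0$, and by the chain rule $Dg(x')=D\psi(y')\circ D\varphi(x')$. Here $D\psi(y')$ is surjective because $\psi$ is a submersion, while $D\varphi(x')$ is surjective precisely because $y'$ is, by hypothesis, a regular value of $\varphi$; a composite of surjective linear maps is surjective, so $Dg(x')$ is onto. Invoking the basic regular-value theorem for $g$ (which follows from the IFT after passing to slice coordinates in which $g$ becomes a coordinate projection), I conclude that $g^{-1}(0)$ is a submanifold of $\varphi^{-1}(V)$ of codimension $n-k$, hence of dimension $\dim X-(n-k)=\dim X-\dim Y+\dim Y_0$, with tangent space $\ker Dg(x')$. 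Since the submanifold property is local and the dimension is constant along $Z$, gluing the local pieces gives that $Z$ is a submanifold of the asserted dimension.

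For the tangent space I would compute $T_xZ=\ker Dg(x)=\{v\in T_xX:D\varphi(x)v\in\ker D\psi(y)\}=(D\varphi(x))^{-1}(T_yY_0)$, using $\ker D\psi(y)=T_yY_0$. In the case $\dim Y_0=0$ (a single regular value, which is exactly the case used in Corollary \ref{cor:ehr} with $Y_0=\{H\}$) one has $T_yY_0=0$ and this reduces to $\ker D\varphi(x)$, recovering the stated formula; the dimensions then agree since surjectivity of $D\varphi(x)$ forces $\dim\ker D\varphi(x)=\dim X-\dim Y$. I expect the only genuine obstacle to be the foundational input, namely packaging the Implicit Function Theorem as the local-submersion/slice-chart normal form, together with the mild bookkeeping that the clean identity $T_xZ=\ker D\varphi(x)$ is the $\dim Y_0=0$ specialization of the general $T_xZ=(D\varphi(x))^{-1}(T_yY_0)$.
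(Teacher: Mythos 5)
Your argument is correct, but there is nothing in the paper to compare it against: the Preimage Theorem is stated in Appendix A as one of several results \emph{recalled} from the differential-topology literature (the paper points the reader to Guillemin--Pollack, Ch.~1, and Schwartz) and is given no proof there. What you supply is the standard textbook derivation: write $Y_0$ locally as the regular zero set of a submersion $\psi$, pass to $g=\psi\circ\varphi$, check via the chain rule that $0$ is a regular value of $g$ (surjectivity of $D\varphi$ at points over $Y_0$ composed with surjectivity of $D\psi$), and invoke the basic regular-value case, which is the Implicit Function Theorem in slice coordinates. That is exactly the proof the cited references give, so there is no methodological divergence to report. One point in your write-up is worth keeping: as literally stated, the tangent-space clause $T_xZ=\ker D\varphi(x)$ is only the $\dim Y_0=0$ case of the correct general formula $T_xZ=(D\varphi(x))^{-1}(T_yY_0)$; you identify this cleanly, and it is harmless for the paper because the theorem is applied in Corollary~\ref{cor:ehr} only with $Y_0=\{H\}$ a single point, where the two formulas coincide.
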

\begin{theorem}[Sard's Theorem]
If $X$ and $Y$ are manifolds and $\varphi:X\ra Y$ is a smooth mapping, then almost every point of $Y$ is a regular value of $\varphi$.
\end{theorem}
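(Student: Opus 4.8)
The plan is to reduce Sard's Theorem to a purely local statement in Euclidean space and then establish that statement by induction on the dimension of the domain, combined with a measure-theoretic covering estimate. First I would invoke second countability of manifolds: both $X$ and $Y$ can be covered by countably many coordinate charts, and a countable union of measure-zero sets is again measure zero, so it is enough to treat a single chart. Since ``almost every point of $Y$ is a regular value'' is the same as saying the set of \emph{critical} values has measure zero, and the critical values are exactly the image of the critical points, the claim reduces to the following local assertion: if $f:U\ra\R^n$ is smooth on an open $U\subseteq\R^m$ and $C\subseteq U$ is the set of points where the Jacobian $Df(x)$ has rank strictly less than $n$, then $f(C)$ has Lebesgue measure zero in $\R^n$.

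I would prove this local assertion by induction on $m$, the case $m=0$ being trivial. For the inductive step I would filter the critical set as $C\supseteq C_1\supseteq C_2\supseteq\cdots$, where $C_i$ consists of the points at which all partial derivatives of $f$ up to order $i$ vanish, and treat the pieces $C\setminus C_1$, each $C_i\setminus C_{i+1}$, and a terminal $C_k$ separately. On $C\setminus C_1$ some first partial derivative is nonzero, so the implicit function theorem lets me straighten one coordinate and present $f$ locally as a parametrized family of maps in one fewer variable, whose critical images have measure zero by the inductive hypothesis; Fubini then gives that $f(C\setminus C_1)$ is null. The same device applied to the higher-order derivatives controls each $f(C_i\setminus C_{i+1})$.

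The remaining and most delicate piece is $f(C_k)$ for $k$ large, and I expect the covering estimate here to be the main obstacle. On $C_k$ all derivatives up to order $k$ vanish, so Taylor's theorem yields $|f(y)-f(x)|\le M|y-x|^{k+1}$ for $x\in C_k$ and $y$ nearby. Subdividing a cube of side $\delta$ in $U$ into $N^m$ subcubes of side $\delta/N$ and bounding the diameter of the image of each subcube that meets $C_k$, I would cover $f(C_k)$ by balls of total $n$-dimensional volume $O\big(N^{m-n(k+1)}\big)$ for fixed $\delta$; choosing $k$ large enough that $n(k+1)>m$ (which also fixes the degree of differentiability actually required of $f$) forces this bound to zero as $N\ra\infty$. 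The care lies precisely in tracking these exponents so the volume genuinely collapses, and then in assembling the pieces via $C=(C\setminus C_1)\cup\bigcup_{i\ge1}(C_i\setminus C_{i+1})\cup C_k$ so that the finitely many relevant strata, each with null image, exhaust $C$ once $k$ is chosen appropriately.
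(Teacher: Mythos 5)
This statement is one of the background results recalled in Appendix~A; the paper gives no proof of it and simply cites the standard references (Guillemin--Pollack, Schwartz), so there is no in-paper argument to compare yours against. Your outline is the classical proof of Sard's theorem in the style of Milnor: reduction to a local Euclidean statement via second countability, induction on the domain dimension, the stratification $C\supseteq C_1\supseteq C_2\supseteq\cdots$ by order of vanishing of the derivatives, the implicit-function-theorem/Fubini argument on $C\setminus C_1$ and on each $C_i\setminus C_{i+1}$, and the Taylor-plus-cube-subdivision estimate giving total image volume $O\bigl(N^{m-n(k+1)}\bigr)$ on the deep stratum $C_k$ with $n(k+1)>m$. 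The exponent bookkeeping is right, and since the paper only needs the $C^\infty$ case, the differentiability threshold you mention causes no trouble. The only cosmetic slip is writing the union as $\bigcup_{i\ge 1}(C_i\setminus C_{i+1})$ rather than the finite union $\bigcup_{i=1}^{k-1}(C_i\setminus C_{i+1})$, which you in any case correct in words. I find no gap: this is a correct sketch of the standard argument that the paper implicitly relies on by citation.
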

\begin{remark}
Note that it can happen that every $x\in X$ is a critical point: this simply means that every $y\in\varphi(X)$ is a critical value, which by Sard's theorem means that $\varphi(X)$ has zero--measure in $Y$. This phenomenon is behind case 1 of Theorem \ref{th:distinct}.
\end{remark}

Another tool that we will use is a celebrated theorem by Ehresmann, a foundational result in differential topology. Before writing it, we recall that a {\em fiber bundle} is a tuple $(E,B,\pi,F)$ where $E,B,F$ are manifolds and $\pi:E\ra B$ is a continuous surjective mapping that is locally like a projection $B\times F\ra E$, in the sense that for any $x\in E$ there exists an open neighborhood $U\subseteq B$ of $\pi(x)$ such that $\pi^{-1}(U)$ is homeomorphic to the product space $U\times F$. For example, $E=\R^2\setminus\{0\}$ is a fiber bundle with base space $B$ the unit circle and fiber $F=\R$, because locally $\R^2\setminus\{0\}$ is as a product space of a short piece of the circle and a line (which goes from $0$ to $\infty$ with no extremes). Fiber bundles are very useful objects in the study of geometry and they are closely related to regular values as the following result (see \cite{Ehresmann1951} or \cite[Th. 5.1]{Rabier1997} for a more general version) shows:
\begin{theorem}[Ehresmann's Theorem]
Let $X,Y$ be smooth manifolds with $Y$ connected, let $U\subseteq X$ be a nonempty open subset of $X$ and let $\pi:U\rightarrow Y$ satisfy:
\begin{itemize}
\item $\pi$ is a submersion, and
\item $\pi$ is proper, i.e. the inverse image of a compact set is a compact set.
\end{itemize}
Then, $\pi:X\ra Y$ is a fiber bundle, and $\pi(U)=Y$.
\end{theorem}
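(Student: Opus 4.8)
The plan is to prove the statement by establishing \emph{local triviality} of $\pi$ over $Y$, since being a fiber bundle is a condition that is local on the base. The engine of the proof is the flow of vector fields lifted through the submersion, and the role of properness is precisely to guarantee that these flows are complete, which is what makes the lifted dynamics usable to move one fiber onto a neighboring one.

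First I would reduce to a local model on the base. Fix $y_0\in Y$ and choose a chart $\psi\colon W\to\R^n$ carrying an open neighborhood $W$ of $y_0$ diffeomorphically onto all of $\R^n$, with $\psi(y_0)=0$, where $n=\dim Y$; set $F=\pi^{-1}(y_0)$, which is a smooth submanifold of $U$ of dimension $\dim U-n$ by the Preimage Theorem. On $W$ the coordinate vector fields $e_1,\dots,e_n$ are defined and their flows are translations, hence complete. Because $\pi$ is a submersion, $D\pi$ is surjective at every point of $\pi^{-1}(W)$, so each $e_i$ admits a smooth \emph{horizontal lift}: a vector field $v_i$ on $\pi^{-1}(W)$ with $D\pi(v_i)=e_i$. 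Such lifts exist locally by the submersion normal form (a submersion looks like a coordinate projection), and the local lifts are glued into a global one on $\pi^{-1}(W)$ by a partition of unity.

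Second, and this is the heart of the matter, I would show that each $v_i$ is complete on $\pi^{-1}(W)$. An integral curve $\gamma(t)$ of $v_i$ projects under $\pi$ to an integral curve of $e_i$, i.e. in coordinates to the translation line $t\mapsto \psi(\gamma(0))+t\,e_i$, which is defined for all $t\in\R$. If the maximal interval of definition of $\gamma$ were a finite half-open interval $[0,T)$, then by the standard escape lemma $\gamma$ would eventually leave every compact subset of $\pi^{-1}(W)$; but $\pi\circ\gamma$ stays inside the compact segment corresponding to $t\in[0,T]$, and since $\pi$ is proper the preimage of that compact segment is compact, so $\gamma$ remains in a compact set, a contradiction. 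Hence each flow $\phi_i^t$ is defined for all $t\in\R$. This completeness is exactly where properness is indispensable: without it the lifted flow could escape to infinity in finite time and no trivialization would result.

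Finally I would assemble the trivialization and conclude. Composing the flows in a fixed order (note that the $v_i$ need not commute, so the order must be fixed), define $\Psi\colon W\times F\to\pi^{-1}(W)$ by $\Psi(y,p)=\phi_n^{x_n}\circ\cdots\circ\phi_1^{x_1}(p)$, where $(x_1,\dots,x_n)=\psi(y)$. Since flowing along $v_i$ advances exactly the $i$-th base coordinate by $x_i$, one checks that $\pi(\Psi(y,p))=y$, so $\Psi$ commutes with the projections; its inverse is obtained by running the flows backwards in the reverse order, so $\Psi$ is a diffeomorphism exhibiting $\pi^{-1}(W)\cong W\times F$. This yields local triviality, hence the fiber-bundle structure, all fibers being diffeomorphic to a common model since $Y$ is connected. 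For the surjectivity claim $\pi(U)=Y$: the image is open, because the local triviality just obtained shows that every point of the image has a whole neighborhood contained in the image, and it is closed, because a proper map into a locally compact Hausdorff space (such as the manifold $Y$) is closed; a nonempty subset of the connected space $Y$ that is both open and closed must be all of $Y$. I expect the completeness argument of the second step to be the main obstacle, as it is the only place where a genuinely global hypothesis (properness) must be converted into the analytic statement that integral curves do not blow up in finite time.
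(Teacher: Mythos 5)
Your proof is correct, but there is nothing in the paper to compare it against: Ehresmann's Theorem is stated in Appendix~\ref{app:difftop} purely as a quoted classical result, with the proof delegated to the references \cite{Ehresmann1951} and \cite[Th. 5.1]{Rabier1997}. What you have written is essentially the standard textbook proof: lift the coordinate vector fields of a chart through the submersion (local normal form plus a partition of unity, which preserves the condition $D\pi(v_i)=e_i$ since it is affine), use properness together with the escape lemma to show the lifted flows are complete, and compose the flows in a fixed order to build the local trivialization $\Psi(y,p)=\phi_n^{x_n}\circ\cdots\circ\phi_1^{x_1}(p)$. All the delicate points are handled: you correctly identify completeness as the place where properness enters, and your surjectivity argument (image open by local triviality --- or simply because a submersion is an open map --- and closed because a proper map into a locally compact Hausdorff space is closed, hence equal to the connected $Y$) is sound. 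One cosmetic remark: the statement in the paper contains a typo, asserting that ``$\pi:X\ra Y$ is a fiber bundle'' when the map is only defined on $U$; your proof correctly establishes the intended conclusion, namely that $\pi:U\ra Y$ is a fiber bundle with $\pi(U)=Y$.
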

In the precedent theorem, if $X$ is compact and $\dim(X)=\dim(Y)$, then the inverse image of any point is a finite set and the fact that every point is regular with the Inverse Mapping Theorem implies that $\pi$ is actually a covering map, that is every point $y\in Y$ has an open  neighborhood $V$ whose preimage by $\pi$ which is equal to a finite number of open sets of $X$, each of them homeomorphic to $V$. Thus:
\begin{corollary}\label{cor:covering}
If in Ehresmann's Theorem we assume moreover that $X$ is compact and $\dim(X)=\dim(Y)$ then $\pi$ defines a covering map. In particular, this implies that every $y\in Y$ has a finite number of preimages, and that number is the same for all $y\in Y$.
\end{corollary}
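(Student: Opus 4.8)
The plan is to upgrade the fiber bundle structure already furnished by Ehresmann's Theorem to a genuine covering map, exploiting the equality of dimensions together with the compactness of $X$. First I would observe that, since $\pi$ is a submersion and $\dim(X)=\dim(Y)$, at each $x\in X$ the derivative $D\pi(x):T_xX\to T_{\pi(x)}Y$ is a surjective linear map between spaces of the same dimension, hence an isomorphism. By the Inverse Mapping Theorem, $\pi$ is therefore a local diffeomorphism: every $x$ has an open neighborhood carried diffeomorphically by $\pi$ onto an open subset of $Y$.

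Next I would pin down the fibers. For any $y\in Y$ the set $\pi^{-1}(y)$ is closed in $X$ by continuity of $\pi$, and a closed subset of the compact space $X$ is compact. On the other hand, since $\pi$ is a local diffeomorphism the fiber $\pi^{-1}(y)$ is discrete. A discrete compact set is finite, so each fiber consists of finitely many points $x_1,\dots,x_n$.

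Then comes the construction of an evenly covered neighborhood, which I expect to be the crux. Using that $X$ is Hausdorff and $\pi$ is a local diffeomorphism, I would choose pairwise disjoint open neighborhoods $W_1,\dots,W_n$ of $x_1,\dots,x_n$, each mapped diffeomorphically onto an open subset of $Y$. The delicate point is to produce a single open neighborhood $V$ of $y$ whose full preimage lies inside $\bigcup_i W_i$; without this, stray preimages could accumulate near $y$. Here compactness of $X$ is essential: if no such $V$ existed, one could take a sequence $y_k\to y$ with points $z_k\in\pi^{-1}(y_k)\setminus\bigcup_i W_i$, extract a convergent subsequence $z_k\to z$ by compactness, and pass to the limit to get $\pi(z)=y$, forcing $z\in\pi^{-1}(y)\subseteq\bigcup_i W_i$; since $\bigcup_i W_i$ is open, this contradicts $z_k\notin\bigcup_i W_i$ for large $k$. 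Shrinking each $W_i$ to $W_i\cap\pi^{-1}(V)$ then exhibits $\pi^{-1}(V)$ as a disjoint union of open sets each diffeomorphic to $V$, which is precisely the statement that $\pi$ is a covering map.

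Finally, the constancy of the number of preimages follows from the connectedness of $Y$: the function $y\mapsto\#\pi^{-1}(y)$ is locally constant on a covering space, hence constant on the connected base. Equivalently, in the fiber bundle language all fibers are homeomorphic to a single finite fiber $F$, so they share the common cardinality $\#F$. The only genuine obstacle in this argument is the evenly-covered-neighborhood step, and it is resolved exactly by the compactness hypothesis that the corollary adds to Ehresmann's Theorem.
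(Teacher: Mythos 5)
Your proof is correct and follows essentially the same route as the paper, which only sketches the argument in the sentence preceding the corollary: equal dimensions plus regularity give a local diffeomorphism via the Inverse Mapping Theorem, compactness makes the (discrete) fibers finite and yields the evenly covered neighborhoods, and connectedness of $Y$ (assumed in Ehresmann's Theorem) makes the locally constant fiber cardinality constant. Your write-up simply supplies the details the paper leaves implicit, including the correct use of compactness to rule out stray preimages.
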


We recall also some known facts from algebraic geometry. Our basic references are \cite{Sh94a,Mu76}. Given complex vector spaces $V_1,\ldots,V_l$, the {\em Segre embedding} is a mapping from the product of projective spaces $\P(V_1)\times\cdots\times\P(V_l)$ into a higher dimensional projective space $\P(\mathcal{T})$ (where $\mathcal {T}$ is a high--dimensional vector space) such that:
\begin{itemize}
\item it is a diffeomorphism into its image (more specifically, it is an embedding), and
\item the image of an algebraic subvariety is an algebraic subvariety and viceversa.
\end{itemize}
The Segre embedding is useful because it allows us to treat some objects (for example, products of Grassmannians) as algebraic subvarieties of a high--dimensional projective space. We will use this at some point combined with the following result
\begin{theorem}[Main Theorem of Elimination Theory]
Let $Z\subseteq \P(\C^a)\times\P(\C^b)$ be an algebraic variety. Then,
\[
\pi_1(Z)=\{x\in \P(\C^a):\exists\,y\in \P(\C^b),(x,y)\in Z\}
\]
is an algebraic subvariety of $X$.
\end{theorem}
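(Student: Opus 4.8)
The plan is to prove the theorem by classical elimination theory, exhibiting the complement of $\pi_1(Z)$ in $\P(\C^a)$ as a Zariski--open set and $\pi_1(Z)$ itself as the common zero locus of explicit minors. First I would fix homogeneous coordinates $x=(x_0,\dots,x_a)$ and $y=(y_0,\dots,y_b)$ and write $Z$ as the common zero set of finitely many bihomogeneous polynomials $F_1,\dots,F_r$, each $F_i$ homogeneous of some degree $c_i$ in $x$ and of some degree $e_i$ in $y$; that every closed subvariety of the product admits such a description follows from the Segre embedding recalled above. With this notation a point $x\in\P(\C^a)$ lies in $\pi_1(Z)$ precisely when the system $F_1(x,\cdot)=\cdots=F_r(x,\cdot)=0$ has a common nonzero solution in $\C^{b+1}$, i.e.\ a common zero in $\P(\C^b)$. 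The task thus reduces to showing that the set of $x$ for which this system has \emph{no} common projective zero is Zariski--open.

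For the key step I would invoke the projective Nullstellensatz: the forms $F_i(x,\cdot)$ have no common nonzero zero if and only if the irrelevant ideal is eventually absorbed, that is $(y_0,\dots,y_b)^N\subseteq\bigl(F_1(x,\cdot),\dots,F_r(x,\cdot)\bigr)$ for some integer $N$. For a \emph{fixed} $N$ this containment is equivalent to the surjectivity of the linear map
\[
\mu_{x,N}:\ \bigoplus_{i=1}^r \C[y]_{N-e_i}\ \longrightarrow\ \C[y]_N,\qquad (G_1,\dots,G_r)\longmapsto \sum_{i=1}^r G_i\,F_i(x,\cdot),
\]
where $\C[y]_m$ is the space of degree--$m$ forms in $y$. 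In the monomial bases the matrix of $\mu_{x,N}$ has entries that are homogeneous polynomials in $x$ (the coefficients of the $F_i$), so surjectivity --- maximal rank --- is the non--vanishing of some maximal minor, an open condition. Hence $U_N=\{x:\mu_{x,N}\text{ surjective}\}$ is Zariski--open, and its complement is exactly the zero set of all maximal minors of $\mu_{x,N}$. Since the containment for $N$ forces it for $N+1$ (multiply by each $y_j$), the $U_N$ increase, and the complement of $\pi_1(Z)$ equals $\bigcup_N U_N$. By Noetherianity of the Zariski topology this increasing union of opens stabilizes at some $U_{N_0}$, so $\pi_1(Z)=\P(\C^a)\setminus U_{N_0}$ is cut out by the maximal minors of $\mu_{x,N_0}$, an algebraic subvariety.

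The main obstacle I anticipate is passing from ``there exists $N$ depending on $x$'' to a single uniform $N_0$ valid on an open set; this is precisely where Noetherianity (equivalently, a degree bound in the Nullstellensatz) is essential, and it must be handled carefully to ensure that the minors genuinely furnish defining equations rather than only a set--theoretic description. As a conceptually shorter alternative that matches the complex--analytic emphasis of this paper, one could instead argue by compactness: over $\C$ the space $\P(\C^b)$ is compact in the Euclidean topology, so $\pi_1$ is proper, hence a closed map, and $\pi_1(Z)$ is Euclidean--closed; combined with Chevalley's theorem (the image of a variety is constructible), a constructible set that is Euclidean--closed must coincide with its Zariski closure and is therefore a subvariety. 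I would present the elimination--theoretic argument as the primary proof, since it is constructive and produces the defining equations explicitly, and record the compactness route as a shorter confirmation.
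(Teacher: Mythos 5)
The paper does not actually prove this statement: it is recalled in Appendix~\ref{app:difftop} as a classical fact with references to Shafarevich and Mumford, and your elimination-theoretic argument (bihomogeneous defining equations via the Segre embedding, the projective Nullstellensatz, surjectivity of the multiplication maps $\mu_{x,N}$ as an open condition, and stabilization of the increasing opens $U_N$ by Noetherianity) is precisely the standard proof found in those sources; it is correct, as is your alternative route via properness of $\pi_1$ in the Euclidean topology combined with Chevalley's theorem. The one detail worth making explicit is that each maximal minor of $\mu_{x,N}$ is genuinely homogeneous in $x$ --- every nonzero entry of a column arising from $F_i$ is a coefficient of $F_i(x,\cdot)$ and hence homogeneous of degree $c_i$, so each minor is homogeneous of degree equal to the sum of the $c_i$ over the selected columns --- which is what guarantees that the complement of $U_{N_0}$ is a well-defined projective subvariety.
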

%
\section{Proof of Lemma \ref{lem:particularH}}
\label{Appendix2}
\begin{IEEEproof}
Let $(U,V), (A,B) \in\mathcal{S}$ be two points, and assume that we have chosen affine representatives that we denote by the same letters ${U},{V},{A},{B}$. Note that there exist nonsingular square matrices $Q_j$ of size $N_j$ and $P_j$ of size $M_j$ such that ${U}_j=Q_j{A}_j$ and ${V}_j=P_j{B}_j$. Consider the following mapping
\[
\begin{matrix}
\pi_2^{-1}(U,V)&\ra&\pi_2^{-1}(A,B)\\
H_{kl}&\mapsto& Q_k^TH_{kl}P_l
\end{matrix}
\]
which is a linear bijection. Thus, $\pi_2^{-1}(U,V)$ is empty or nonempty for every $(U,V)\in\mathcal{S}$ and it suffices to prove the claim for {\em some} $(U,V)\in\mathcal{S}$. If it is nonempty for some (thus, all) $(U,V)$, let $(U,V)\in\mathcal{S}$ be a regular value of $\pi_2$. Then, from the Preimage Theorem $\pi_2^{-1}(U,V)$ is a smooth submanifold of $\CV$ of the claimed dimension (the dimension of $\CV$ is given in Lemma \ref{lem:V}.) Moreover, it is given by the nullset of a set of linear (in $H$) equations and is thus a product of projective vector subspaces as claimed.

We now discard the case that $\pi_2^{-1}(U,V)$ is empty for every $(U,V)\in\mathcal{S}$ (equivalently, $\CV$ is empty).
Note that since we have assumed \eqref{eq:p2p_assumption2} holds, the particularly simple element $(H,U,V)$, first described in Section \ref{sec:floatingpoint_test}, is in $\CV$ and hence $\CV\neq\emptyset$.
\end{IEEEproof}

\section{Proof of Lemma \ref{lem:aux}}
\label{Appendix6}

\begin{IEEEproof}
Let $\theta_1$ be the mapping of (\ref{eq:4}) for representatives $({H}_1,{U}_1,{V}_1)$ of $(H,U,V)$, and similarly let $\theta_2$ be the mapping of (\ref{eq:4}) for representatives $({H}_2,{U}_2,{V}_2)$ of $(H,U,V)$. We need to prove that if $\theta_1$ is surjective then so is $\theta_2$. Because both affine points are representatives of the same $(H,U,V)$, there exist complex numbers $(\lambda_{kl})_{(k,l)\in\Phi}$ and nonsingular matrices $Q_k\in\mathbb{C}^{d_k\times d_k}$, $k\in\Phi_R$, and $P_l\in\mathbb{C}^{d_l\times d_l}$, $l\in\Phi_T$, such that
\[
({H}_2)_{kl}=\lambda_{kl}({H}_2)_{kl},\quad ({U}_2)_k=({U}_1)_k Q_k,\quad ({V}_2)_l=({V}_1)_l P_l.
\]
Let $\dot R=(\dot R_{kl})_{(k,l)\in\Phi}\in\prod_{(k,l)\in\Phi}\mathbb{C}^{d_k\times d_l}$. If $\theta_1$ is surjective, there exist $(\{\dot{{U}}_k\}_{k\in\Phi_R},\{\dot{{V}}_l\}_{l\in\Phi_T})$ such that
\[
\dot{{U}}_k^T ({H}_1)_{kl}{({V}_1)}_l+{({U}_1)}_k^T({H}_1)_{kl}\dot{{V}}_l=\lambda_{kl}^{-1}(Q_k^T)^{-1}\dot R_{kl}P_l^{-1}.
\]
Then,
\begin{equation*}
\begin{split}
(\theta_2&(\{\dot{{U}}_kQ_k\}_{k\in\Phi_R},\{\dot{{V}}_lP_l\}_{l\in\Phi_T}))_{kl}\\
&=Q_k^T\dot{{U}}_k^T ({H}_2)_{kl}{({V}_2)}_l+{({U}_2)}_k^T({H}_2)_{kl}\dot{{V}}_lP_l\\
&=\lambda_{kl}\left(Q_K^T\dot{{U}}_k^T ({H}_1)_{kl}({V}_1)_l P_l+Q_k^T({U}_1)_k^T ({H}_1)_{kl}\dot{{V}}_lP_l\right)\\
&=\lambda_{kl}Q_k^T\left(\dot{{U}}_k^T ({H}_1)_{kl}({V}_1)_l +({U}_1)_k^T ({H}_1)_{kl}\dot{{V}}_l\right)P_l\\
&=\lambda_{kl}Q_k^T\left(\lambda_{kl}^{-1}(Q_k^T)^{-1}\dot R_{kl}P_l^{-1}\right)P_l=\dot R_{kl}.
\end{split}
\end{equation*}
Thus, $\theta_2$ is surjective as claimed.
\end{IEEEproof}

\section{Proof of Proposition \ref{prop:charsigma}}
\label{Appendix7}

\begin{IEEEproof}
Assume first that $\theta$ is surjective, and let $(H,U,V)$ be some fixed affine representatives. For any tangent vector $\dot{H}$, let $\dot R=(\dot R_{kl})_{(k,l)\in\Phi}\in \prod_{(k,l)\in\Phi}\mathbb{C}^{d_k\times d_l}$ be defined as
\[
\dot R_{kl}=-{U}_k^T\dot{{H}}_{kl}{V}_l.
\]
Because $\theta$ is surjective, there exists $(\dot{{U}},\dot{{V}})\in\theta^{-1}(\dot R)$, that is $(\dot{{U}},\dot{{V}})$ satisfying
\begin{equation}\label{eq:ecuacion}
\dot{{U}}_k^T {H}_{kl}{{V}}_l+{{U}}_k^T{H}_{kl}\dot{{V}}_l=-{U}_k^T \dot{{H}}_{kl}{V}_l,\quad (k,l)\in\Phi.
\end{equation}
Note that the equations defining $\CV$ are precisely $U_k^TH_{kl}V_l=0$, $(k,l)\in\Phi$, and thus from the Preimage theorem we can cover the tangent space to $\CV$ at $(H,U,V)$ with those $(\dot H,\dot U,\dot V)$ satisfying \eqref{eq:ecuacion}. We conclude that $(\dot{{H}},\dot{{U}},\dot{{V}})$ is in the tangent space to $\CV$ at $({H},{U},{V})$, and thus $D\pi_1(H,U,V)(\dot H,\dot U,\dot V)=\dot H$, which means that $D\pi_1(H,U,V)^{-1}(\dot H)\neq\emptyset$. As $\dot H$ was chosen generically, we conclude that $\pi_1$ is a submersion at $(H,U,V)$, namely $(H,U,V)$ is a regular point of $\pi_1$ as wanted. This finishes the ``if'' part of the proposition.

The ``only if'' part is a converse reasoning: assume that $(H,U,V)$ is a regular point of $\pi_1$. This means that for every $\dot H\in T_H\mathcal{H}$ there exist $(\dot U,\dot V)\in T_{(U,V)}\mathcal{S}$ such that $(\dot H,\dot U,\dot V)\in T_{(H,U,V)}\CV$, which means that these tangent vectors satisfy \eqref{eq:ecuacion}. Let $(\dot{R}_{kl})_{(k,l)\in\Phi} \in\prod_{(k,l)\in\Phi}\mathbb{C}^{d_k\times d_l}$. Now, because $U_k$ and $V_l$ are representatives of an element of the Grassmanian, they are full rank and thus we can write $\dot{R}_{kl}=-U_k^T\dot{{H}}_{kl} V_l$ for some $\dot H_{kl}$. Then, \eqref{eq:ecuacion} reads
\[
\dot{{U}}_k^T {H}_{kl}{{V}}_l+{{U}}_k^T{H}_{kl}\dot{{V}}_l=-{U}_k^T \dot{{H}}_{kl}{V}_l=\dot R_{kl},\quad (k,l)\in\Phi,
\]
that is all such $\dot R_{kl}$ have a preimage by $\theta$, and $\theta$ is surjective.
\end{IEEEproof}
\section{Proof of Proposition \ref{prop:sigma}}
\label{Appendix8}

\begin{IEEEproof}
From Proposition \ref{prop:charsigma}, $\Sigma'$ can be written as the set of $(H,U,V)$ such that all the minors of the matrix defining $\theta$ are equal to $0$. Thus, $\Sigma'$ is an algebraic subvariety of $\CV$.
The set $\mathcal{H}$ is a product of projective spaces and hence the associated Segre embedding defines a natural embedding
\begin{equation}\label{eq:phi2}
\varphi_1:\mathcal{H}\rightarrow\P(\mathcal{T}_1),
\end{equation}
where $\mathcal{T}_1$ is a high--dimensional complex vector space.

Let $\wedge^a(\C^b)$ the $a$--th exterior power of $\C^b$. Then, the Grassmannian $\G{a}{b}$ can be seen as an algebraic subset of a complex projective space $\P(\wedge^a(\C^b))$, and as a compact complex manifold of (complex) dimension $a(b-a)$  (see for example \cite[p.42]{Sh94a} and \cite[p. 175--176]{Whitney1972}). The Segre embedding defines a natural embedding
\begin{equation}\label{eq:phi1}
\varphi_2:\mathcal{S}\rightarrow\P(\mathcal{T}_2),
\end{equation}
where $\mathcal{T}_2$ is a certain (high--dimensional) complex vector space. Both $\varphi_1$ and $\varphi_2$ define diffeomorphisms between their domains and ranges, as does the product mapping $\varphi_1\times\varphi_2$, and they preserve algebraic varieties in both ways. We can thus identify $\mathcal{H}\equiv\varphi_1(\mathcal{H})$, $\mathcal{S}\equiv\varphi_2(\mathcal{S})$ and see $\CV$ as an algebraic subvariety of the product space
\[
\CV\equiv (\varphi_1\times\varphi_2)(\CV)\subseteq\P(\mathcal{T}_1)\times\P(\mathcal{T}_2).
\]
The Main Theorem of Elimination Theory then grants that $\Sigma=\pi_1(\Sigma')$ is an algebraic subvariety of $\mathcal{H}$. We moreover have that it is a proper subvariety because by Sard's Theorem it has zero--measure in $\mathcal{H}$.
\end{IEEEproof}

\section{Derivation of \eqref{eq:gamma_functions}}
\label{Appendix_conjecture}
The execution of the proposed test for a large number of scenarios suggests that $\gamma(p)$ and $\gamma'(p)$, which we will indistinctly denote as $\gamma^\star(p)$, are given by
\begin{eqnarray}\label{eq:gamma_recursion}
\gamma^\star(p)=\frac{F^\star_{p}}{F^\star_{p+1}}\notag
\end{eqnarray}
where $F^\star_p$ satisfies the recurrence relation $F^\star_{p+1}=(K-1)F^\star_{p}-F^\star_{p-1}$ with initial conditions $F_1=1,\ F_0=-1$ (for $\gamma(p)$), and $F'_1=0,\ F'_0=-1$ (for $\gamma'(p)$). Sequences satisfying this recurrence equation are known as Lucas Sequences because any such a sequence can be represented as a linear combination of the Lucas sequences of first and second kind. Lucas sequences are a generalization of other famous sequences including Fibonacci numbers, Mersenne numbers, Pell numbers, Lucas numbers, etc. The interested reader can find a good introduction to Lucas sequences in \cite[Chapter 17]{Dickson2012}.

For convenience, we rewrite the recurrence relation in matrix form $f^\star_p=Af^\star_{p-1}$, where
\begin{equation}
\label{eq:matrix_recursion}
\underbrace{\left(\begin{array}{c}
F^\star_{p+1}\\
F^\star_{p}
\end{array}\right)}_{f^\star_p}=\underbrace{\left(\begin{array}{cc}
(K-1) & -1\\
1 & 0
\end{array}\right)}_{A}\underbrace{\left(\begin{array}{c}
F^\star_{p}\\
F^\star_{p-1}
\end{array}\right)}_{f^\star_{p-1}}.\notag
\end{equation}
Now, we are interested in writing $f^\star_p$ as a function of the initial conditions, i.e. $f^\star_p=A^pf^\star_0$. In order to do so, we first need the eigenvalue decomposition of $A$. The eigenvalues are the roots of the characteristic polynomial
\begin{equation}
\label{eq:characteristic_polynomial}
\operatorname{det}(A-\lambda I)=\lambda^2-(K-1)\lambda+1=0,\notag
\end{equation}
which are given by
\begin{equation}
\label{eq:roots_characteristic_polynomial}
\lambda_\pm=\frac{1}{2}((K-1)\pm\sqrt{(K-1)^2-4}).\notag
\end{equation}
Notice that given $\operatorname{det}(A)=1$, $\lambda_-=1/\lambda_+$. Thus, for convenience we define $\lambda=\lambda_-$ and factorize $A^p=S\Lambda^p S^{-1}$:
\begin{equation}
\label{eq:spectral_decomposition}
A^p=\underbrace{\left(\begin{array}{cc}
1/\lambda &\lambda\\
1& 1
\end{array}\right)}_{S}
\underbrace{
\left(\begin{array}{cc}
1/\lambda^p &0\\
0& \lambda^p
\end{array}\right)}_{\Lambda^p}
\underbrace{\left(\begin{array}{cc}
1 & -\lambda\\
-1& 1/\lambda
\end{array}\right)\frac{\lambda}{1-\lambda^2}}_{S^{-1}},\notag
\end{equation}
where the columns of $S$ are the eigenvectors of $A$. Then, using the fact that $f^\star_p=S\Lambda^p S^{-1}f^\star_0$, it is straightforward to obtain a compact expression for $F^\star_p$:
\begin{equation}
\label{eq:F_p}
F^\star_p=\lambda^{-p+1}\left(F^\star_1 \sum_{k=0}^{p-1}\lambda^{2k}-F^\star_0 \sum_{k=0}^{p-2}\lambda^{2k+1}\right).
\end{equation}
Finally, when the corresponding initial conditions are substituted in \eqref{eq:F_p}, we can write
  \begin{equation*}
\gamma(p)=\frac{F_p}{F_{p+1}}=\frac{\displaystyle \sum_{k=-(p-1)}^{(p-1)}\lambda^k}{\displaystyle \sum_{k=-p}^{p}\lambda^k}
\end{equation*}
and
\begin{equation*}
\label{eq:gamma_functions_appendix}
\gamma'(p)=\frac{F'_p}{F'_{p+1}}=\lambda \quad\frac{\displaystyle  \sum_{k=0}^{p-2}\lambda^{2k}}{\displaystyle  \sum_{k=0}^{p-1}\lambda^{2k}}.
\end{equation*}
A final observation is that
$
\lim_{p\to\infty} \gamma(p) =\lim_{p\to\infty} \gamma'(p) = \lim_{p\to\infty}\frac{F^\star_p}{F^\star_{p+1}} =\lambda
$
and, thus, $\lambda$ is also a threshold value separating the so-called piecewise linear and properness-limited DoF regimes.

\section*{Acknowledgment}
We would like to thank anonymous reviewers and the Associate Editor for helpful comments and suggestions which significantly improved the quality of the paper.



%
%
%
%

\bibliographystyle{IEEEtran}
\bibliography{myreferences}

%

\begin{IEEEbiographynophoto}{\'Oscar~Gonz\'alez} (S’10) received the B.S. degree in
Telecommunication Engineering from the University
of Cantabria, Santander, Spain, in 2009, where he
has been working toward the Ph.D. degree in the
Communications Engineering Department, under the
supervision of I. Santamaría, since 2009. During 2012, he was a visiting researcher at the Wireless Networking and Communications Group (The University of Texas at Austin). His current research interests include signal processing for wireless communications, interference management/alignment techniques, multiple-input multiple-output (MIMO) systems and the development of wireless communications demonstrators. He has been involved in several national and international research
projects on these topics.
\end{IEEEbiographynophoto}

\begin{IEEEbiographynophoto}{Carlos~Beltr\'an}
received the Ph. D Degree in Mathematics from the Universidad de Cantabria, Spain, in 2006. He held a postdoctoral fellowship at the U. of Toronto during 2007 and 2008, and is currently a Profesor Titular at the Universidad de Cantabria. He has visited for short periods the \'{E}chole Polytechnique (Paris, 2004), the Universit\'{e} Paul Sabatier (Toulouse, 2005) and the Instituto Nacional de Matem\'{a}tica Pura e Aplicada (Rio de Janeiro, 2008). He was awarded by the Real Sociedad Matem\'{a}tica Espa\~{n}ola the Jose Luis Rubio de Francia 2010 prize for his solution to Smale's 17th problem. His research interests include Numerical Analysis, Complexity and Numerical Algebraic Geometry, as well as applied problems.
\end{IEEEbiographynophoto}

\begin{IEEEbiographynophoto}{Ignacio~Santamar\'ia}
(M’96, SM’05) received his Telecommunication Engineer Degree and his Ph.D. in electrical engineering from the Universidad Polit\'{e}cnica de Madrid (UPM), Spain, in 1991 and 1995, respectively. In 1992 he joined the Department of Communications Engineering, University of Cantabria, Spain, where he is currently Full Professor. He has co-authored more than 150 publications in refereed journals and international conference papers and holds 2 patents. His current research interests include signal processing algorithms and information-theoretic aspects of multi-user multi-antenna wireless communication systems, multivariate statistical techniques and machine learning theories. He has been involved in numerous national and international research projects on these topics. He has been a visiting researcher at the Computational NeuroEngineering Laboratory (University of Florida), and at the Wireless Networking and Communications Group (The University of Texas at Austin).
Dr. Santamaria was a Technical Co-Chair of the 2nd International ICST Conference on Mobile Lightweight Wireless Systems (MOBILIGHT 2010), Special Sessions Co-Chair of the 2011 European Signal Processing Conference (EUSIPCO 2011), and General Co-Chair of the 2012 IEEE Workshop on Machine Learning for Signal Processing (MLSP 2012). Since 2009 he has been a member of the IEEE Machine Learning for Signal Processing Technical Committee. Currently, he serves as Associate Editor of the IEEE Transactions on Signal Processing. He was a co-recipient of the 2008 EEEfCOM Innovation Award, as well as coauthor of a paper that received the 2012 IEEE Signal Processing Society Young Author Best Paper Award.

\end{IEEEbiographynophoto}

%
%




\end{document}